\newcounter{alphthm}
\theoremstyle{plain}
\newtheorem{theorem}{Theorem}[section]
\newtheorem{lemma}[theorem]{Lemma}
\newtheorem{cor}[theorem]{Corollary}
\theoremstyle{definition}
\newcommand{\be}{\begin{equation}}
\newcommand{\ee}{\end{equation}}
\newcommand{\ben}{\begin{enumerate}}
\newcommand{\een}{\end{enumerate}}
\begin{document}

\title{A K\"{a}hler Structure on Cartan Spaces}
\author{E. Peyghan and A. Tayebi}

\maketitle
\begin{abstract}
In this paper, we define a new metric on  Cartan manifolds  and
obtain a K\"{a}hler structure on their cotangent bundles. We prove that on a Cartan manifold $M$ of negative
constant flag curvature, $(T^{\ast}M_0,G,J)$ has  a K\"{a}hlerian structure. For Cartan manifolds of positive constant flag curvature, we show that the tube around the zero section has a K\"{a}hlerian structure. Finally by computing the Levi-Civita connection and components of curvature related to this metric, we show that
there is no non-Riemannian Cartan structure such that
$(T^{\ast}M_0, G, J)$ became a Einstein manifold or locally
symmetric manifold.\footnote{2010 {\it Mathematics Subject
Classification}: Primary 53B40, 53C60}
\end{abstract}

\textbf{Keywords:} Cartan space, K\"{a}hler structure, symmetric
space, Einstein manifold.

\section{Introduction}
The modern formulation of the notion of Cartan spaces is due to 
the R. Miron \cite{Mir1}, \cite{Mir2}, \cite{Mir3}.
 Based on the studies of E. Cartan, A. Kawaguchi \cite{Kaw}, R.
Miron \cite{MirHri}, \cite{Mir2}, \cite{Mir3}, S. Vacaru \cite{V1}, \cite{V2}, \cite{V3},  D. Hrimiuc and
H. Shimada \cite{Hri}, \cite{HriShi},
 P.L. Antonelli and  M. Anastasiei  \cite{Anas}, \cite{MirAn}, \cite{M2}, etc.,
 the geometry of Cartan spaces is today an important chapter of differential geometry.

Analytical Mechanics and some theories in Physics brought into
discussion regular Lagrangians and their geometry \cite{MirAn}.
 Regular Lagrangian which is 2-homogeneous in velocities is nothing but the
square of a fundamental Finsler function and its geometry is
Finsler geometry. This geometry was developed since 1918 by P.
Finsler, E. Cartan, L. Berwald, H. Akbar-Zadeh and many others, see \cite{Matsu}
and the most recent graduate texts  \cite{AZ}\cite{BCS}\cite{ShDiff}\cite{ShLec}. On the other hand, there is a strong linkage among  Finsler, Hamilton and Cartan geometries \cite{PT2}. For example,  Anastasiei and Vacaru provide a method of converting Lagrange and Finsler spaces and their Legendre transforms to Hamilton and Cartan spaces into almost K\"{a}hler structures on tangent and cotangent bundles \cite{AV}. But in Mechanics
and Physics there exists also regular Hamiltonians whose geometry
is also useful. This geometry is mainly due to Miron, and it is now systematically presented in the
monograph \cite{MirHri}. A manifold endowed with a regular
Hamiltonian which is 2-Homogeneous in momenta was called a Cartan space.

Let us denote the Hamiltonian structure on a manifold $M$ by $(M,
H(x,p))$. If the fundamental function $H(x,p)$ is 2-homogeneous
on the fibres of the cotangent bundle $(T^*M, M)$, then the
notion of Cartan space is obtained. It is remarkable that these
spaces appear as dual of the Finsler spaces, via Legendre
transformation. Using this duality several important results in
the Cartan spaces can be obtained: the canonical nonlinear
connection, the canonical metrical connection, etc. Therefore, the
theory of Cartan spaces has the same symmetry and beauty like
Finsler geometry. Moreover, it gives a geometrical framework for
the Hamiltonian theory of Mechanics or Physical fields.

Let $(M,K)$ be a Cartan space on a manifold $M$ and $\tau :=\frac{1}{2}K^2$. Let us  define the
symmetric $M$-tensor field $G_{ij}:=\frac{1}{\beta}g_{ij}+\frac{v(\tau )}{\alpha \beta }p_{i}p_{j}$ on $T^{\ast}M_0$,
where $v$ is  a real valued smooth function  defined on $[0,\infty)\subset {\mathbb R} $ and $\alpha $ and $\beta$ are real constants. Then we can define the following Riemannian metric on $T^{\ast}M_0$
\[
G=G_{ij}dx^{i}dx^{j}+G^{ij}\delta p_{i}\delta p_{j},
\]
where $G^{ij}$ is the inverse of $G_{ij}$. Then we define an almost complex structure  $J$  on  $T^{\ast}M_0$
by $J(\delta _{i})=G_{ik}\dot{\partial}^k$ and $J(\dot{\partial}^i)=-G^{ik}\delta _k$.

In this paper, we prove that  $(T^{\ast}M_0,G,J)$ is an almost K\"{a}hlerian manifold. Then we show that the almost complex structure $J$ on $T^{\ast}M_0$ is integrable if and only if $M$ has constant scalar curvature $c$ and  the function $v$ is given  by $v=-c\alpha\beta^2$. We conclude that on a Cartan manifold $M$ of negative
constant flag curvature, $(T^{\ast}M_0,G,J)$ has  a K\"{a}hlerian structure. For Cartan manifolds of positive constant flag curvature, we show that the tube around the zero section has a K\"{a}hlerian structure (see Theorem \ref{THM1}).

Then we find the Levi-Civita connection $\nabla$ of the metric
$G$. For the connection $\nabla$, we compute all of components
curvature. For a Cartan space $(M,K)$ of  constant curvature $c$,
we prove that in the following cases $(M,K)$ reduces to a
Riemannian space: $(i)$ for $c<0$, $(T^{\ast}M_0,G,J)$ became a
K\"{a}hler Einstein manifold,\ $(ii)$ for $c>0$,
$(T_{\beta}^{\ast}M_0,G,J)$ became a K\"{a}hler Einstein manifold,
where $T_{\beta}^{\ast}M_0$ is the tube in $T^{\ast}M_0$.
It results that, there is not any non-Riemannian Cartan structure
such that $(T^{\ast}M_0, G, J)$ became a Einstein manifold (see Theorem \ref{THM2}).

Finally, for a Cartan space $(M,K)$ of  constant curvature $c$, we prove
that in the following cases $(M,K)$  reduces to a Riemannian
space: $(i)$ for $c<0$, $(T^{\ast}M_0,G,J)$ is a locally symmetric
K\"{a}hler manifold,\ $(ii)$ for $c>0$,
$(T_{\beta}^{\ast}M_0,G,J)$ is a locally symmetric K\"{a}hler
manifold. Then we conclude that there is not any non-Riemannian
Cartan structure such that $(T^{\ast}M_0, G, J)$ became a locally
symmetric manifold (see Theorem \ref{THM3}).

\section{Preliminaries}
Let $M$ be an $n$-dimensional $C^{\infty}$ manifold and
$\pi^{\ast}:T^{\ast}M\rightarrow M$ its cotangent bundle. If
$(x^i)$ are local coordinates on $M$, then $(x^i,p_i)$ will be taken
as local coordinates on $T^{\ast}M$ with the momenta $(p_i)$
provided by $p=p_idx^i$ where $p\in T^{\ast}_xM$, $x=(x^i)$ and
$(dx^i)$ is the natural basis of $T^{\ast}_xM$. The indices $i, j,
k,\ldots$ will run from 1 to $n$ and the Einstein convention on
summation will be used.

Put $\partial_i:=\frac{\partial}{\partial
x^i}$ and $\dot{\partial}^i:=\frac{\partial}{\partial p_i}$. Then $(\partial_i, \dot{\partial}^i)$ be the
natural basis in $T_{(x,p)}T^{\ast}M$ and $(dx^i,dp_i)$ be the dual
basis of it. The kernel $V_{(x,p)}$ of the differential
$d\pi^{\ast}:T_{(x,p)}T^{\ast}M\rightarrow T_xM$ is called the
vertical subspace of $T_{(x,p)}T^{\ast}M$ and the mapping
$(x,p)\rightarrow V_{(x,p)}$ is a regular distribution on
$T^{\ast}M$ called the vertical distribution. This is
integrable with the leaves $T^{\ast}_xM$, $x\in M$ and is locally
spanned by $\dot{\partial}^i$. The vector field
$C^{\ast}=p_i\dot{\partial}^i$ is called the Liouville vector field
and $\omega=p_idx^i$ is called the Liouville 1-form on $T^{\ast}M$.
Then $d\omega$ is the canonical symplectic structure on $T^{\ast}M$.
For an easer handling of the geometrical objects on $T^{\ast}M$ it
is usual to consider a supplementary distribution to the vertical
distribution, $(x,p)\rightarrow N_{(x,p)}$, called the
horizontal distribution and to report all geometrical
objects on $T^{\ast}M$ to the decomposition
\begin{equation}
T_{(x,p)}T^{\ast}M=N_{(x,p)}\oplus V_{(x,p)}.\label{decom}
\end{equation}
The pieces produced by the decomposition (\ref{decom}) are called
\textit{d}-geometrical objects (\textit{d} is for distinguished)
since their local components behave like geometrical objects on $M$,
although they depend on $x=(x^i)$ and momenta $p=(p_i)$.

The horizontal distribution is taken as being locally spanned by the
local vector fields
\begin{equation}
\delta_i:=\partial_i+N_{ij}(x,p)\dot{\partial}^j.\label{decom1}
\end{equation}
The horizontal distribution is called also a nonlinear
connection on $T^{\ast}M$ and the functions $(N_{ij})$ are called
the local coefficients of this nonlinear connection. It is important
to note that any regular Hamiltonian on $T^{\ast}M$ determines a
nonlinear connection whose local coefficients verify
$N_{ij}=N_{ji}$. The basis $(\delta_i,\dot{\partial}^i)$ is adapted
to the decomposition (\ref{decom}). The dual of it is $(dx^i,\delta
P_i)$, for $\delta p_i=dp_i-N_{ji}dx^j$.

A Cartan structure on $M$ is a function
$K:T^{\ast}M\longrightarrow [0,\infty)$  which has the following properties: (i) $K$ is $C^{\infty}$ on $T^{\ast}M_0=T^{\ast}M-\{0\}$; (ii) $K(x,\lambda p)=\lambda K(x,p)$ for all $\lambda>0$ and (iii) the $n\times n$ matrix $(g^{ij})$, where $g^{ij}(x,p)=\frac{1}{2}\dot{\partial}^i\dot{\partial}^jK^2(x,p)$, is positive definite at all  points of $T^{\ast}M_0$. We notice that in fact $K(x,p)>0$, whenever $p\neq0$. The pair $(M,K)$ is called a Cartan space. Using this notations, let us define
\[
p^i=\frac{1}{2}\dot{\partial}^iK^2\ \  \textrm{and}\ \ C^{ijk}=-\frac{1}{4}\dot{\partial}^i\dot{\partial}^j\dot{\partial}^kK^2.
\]
The properties of $K$ imply that
\begin{eqnarray}
&&p^i=g^{ij}p_j,\ \ p_i=g_{ij}p^j,\ \
K^2=g^{ij}p_ip_j=p_ip^j,\nonumber\\
&&C^{ijk}p_k=C^{ikj}p_k=C^{kij}p_k=0.
\end{eqnarray}
One considers the formal Christoffel symbols by
\begin{equation}
\gamma^i_{jk}(x,p):=\frac{1}{2}g^{is}(\partial_kg_{js}+\partial_jg_{sk}-\partial_sg_{jk}),
\end{equation}
and the contractions
$\gamma^{\circ}_{jk}(x,p):=\gamma^i_{jk}(x,p)p_i$,
$\gamma^{\circ}_{j\circ}:=\gamma^i_{jk}p_ip^k$. Then the functions
\begin{equation}
N_{ij}(x,p)=\gamma^{\circ}_{ij}(x,p)-\frac{1}{2}\gamma^{\circ}_{h\circ}(x,p)\dot{\partial}^hg_{ij}(x,p),\label{decom2}
\end{equation}
define a nonlinear connection on $T^{\ast}M$. This nonlinear
connection was discovered by R. Miron \cite{Mir1}. Thus a
decomposition (\ref{decom}) holds. From now on we shall use only
the nonlinear connection given by (\ref{decom2}).

A linear connection $D$ on $T^{\ast}M$ is said to be an
$N$-linear connection if $D$ preserves by parallelism the
distribution $N$ and $V$, also we have $D\theta=0$, for
$\theta=\delta p_i\wedge dx^i$. One proves that an $N$-linear
connection can be represented in the adapted basis
$(\delta_i,\dot{\partial}^i)$ in the form
\begin{eqnarray}
D_{\delta_j}\delta_i\!\!\!\!&=&\!\!\!\!H^k_{ij}\delta_j,\ \ \
D_{\delta_j}\dot{\partial}^i=-H^i_{kj}\dot{\partial}^k,\\
D_{\dot{\partial}^j}\delta_i\!\!\!\!&=&\!\!\!\!V^{kj}_i\delta_k,\ \
\ D_{\dot{\partial}^j}\dot{\partial}^i=-V^{ij}_k\dot{\partial}^k,
\end{eqnarray}
where $V^{kj}_i$ is a \textit{d}-tensor field and $H^k_{ij}(x,p)$
behave like the coefficients of a linear connection on $M$. The
functions $H^k_{ij}$ and $V^{kj}_i$ define operators of
\textit{h}-covariant and \textit{v}-covariant derivatives in the
algebra of \textit{d}-tensor fields, denoted by $_{|k}$ and
$\mid^k$, respectively. For $g^{ij}$ these are given by
\begin{eqnarray}
{g^{ij}}_{|k}\!\!\!\!&=&\!\!\!\!\delta_kg^{ij}+g^{sj}H^i_{sk}+g^{is}H^j_{sk},\\
g^{ij}\!\!\mid^k\!\!\!\!&=&\!\!\!\!\dot{\partial}^kg^{ij}+g^{sj}V^{ik}_s+g^{is}V^{jk}_s.
\end{eqnarray}
An \textit{N}-linear connection given in the adapted basis
$(\delta_i,\dot{\partial}^j)$ as $D\Gamma(N)=(H^i_{jk},V^{ik}_j)$ is
called metrical if
\begin{equation}
{g^{ij}}_{|k}=0,\ \ \ \ g^{ij}\!\!\mid^k=0.
\end{equation}
One verifies that the \textit{N}-linear connection
$C\Gamma(N)=(H^i_{jk},C^{ik}_j)$ with
\begin{eqnarray}
H^i_{jk}\!\!\!\!&=&\!\!\!\!\frac{1}{2}g^{is}(\delta_jg_{sk}+\delta_kg_{js}-\delta_sg_{jk}),\\
C^{jk}_i\!\!\!\!&=&\!\!\!\!-\frac{1}{2}g_{is}(\dot{\partial}^jg^{sk}+\dot{\partial}^kg^{sj}-\dot{\partial}^sg^{jk})=g_{is}C^{sjk},
\end{eqnarray}
is metrical and its \textit{h}-torsion
$T^i_{jk}:=H^i_{jk}-H^i_{kj}=0$, \textit{v}-torsion
$S^{jk}_i:=C^{jk}_i-C^{kj}_i=0$ and deflection tensor
$\Delta_{ij}=N_{ij}-p_kH^k_{ij}=0$. Moreover, it is unique with
these properties. This is called the canonical metrical connection
of the Cartan space $(M,K)$. It has also the following properties:
\begin{eqnarray}
K^2_{|j}:\!\!\!\!&=&\!\!\!\!\delta_jK^2=0,\ \ K^2\!\!\mid^j=2p^j,\\
p_{i|j}\!\!\!\!&=&\!\!\!\! p^i_{|j}=0,\ \  p_i\!\!\mid^j=\delta^j_i,\ \ p^i\!\!\mid^j=g^{ij}.\label{4}\\
R_{kij}p^k\!\!\!\!&=&\!\!\!\!0, \ \  P^i_{jk}p^j=0,\ \ P^i_{jk}:=H^i_{jk}-{\dot{\partial}}^iN_{jk}.\label{5}\\
\delta_ig_{jk}\!\!\!\!&=&\!\!\!\!H^s_{ji}g_{sk}+H^s_{ki}g_{js}.\label{6}
\end{eqnarray}
\section{ K\"{a}hler Structures on $T^{\ast}M_0$}
Suppose that 
\begin{equation}
 \tau :=\frac{1}{2}K^2=\frac{1}{2}g^{ij}(x,p)p_{i}p_{j},
\end{equation}
we consider a real valued smooth function $v$ defined on
$[0,\infty)\subset {\mathbb R} $ and real constants $\alpha $ and
$\beta$. We define the following symmetric $M$-tensor field of type
(0,2) on $T^{\ast}M_0$ having the components
\begin{equation}
G_{ij}:=\frac{1}{\beta}g_{ij}+\frac{v(\tau )}{\alpha \beta }p_{i}
p_{j}.
\end{equation}
It follows easily that the matrix  $(G_{ij})$ is positive definite
if and only if  $\alpha ,\beta >0,\, \, \,\, \alpha +2\tau v>0.$ The
inverse of this matrix has the entries
\begin{equation}
G^{kl}=\beta g^{kl}-\frac{v\beta }{\alpha +2\tau v}p^{k} p^{l}.
\end{equation}
The components  $G^{kl} $ define symmetric  $M$-tensor field of
type (0,2) on  $T^{\ast}M_0$. It is easy to see that if the matrix
$(G_{ij} )$ is positive definite, then matrix $(G^{kl} )$ is
positive definite too. Using $(G_{ij})$ and $(G^{ij})$, the following Riemannian metric
 on $T^{\ast}M_0$ is defined
\begin{equation}
G=G_{ij}dx^{i}dx^{j}+G^{ij}\delta p_{i}\delta p_{j}.
\end{equation}

Now, we define an almost complex structure  $J$  on  $T^{\ast}M_0$
by
\begin{equation}
J(\delta _{i})=G_{ik}\dot{\partial}^k,\ \ \
J(\dot{\partial}^i)=-G^{ik}\delta _k.
\end{equation}
It is easy to check that $J^2=-I$.
\begin{theorem}
 $(T^{\ast}M_0,G,J)$ is an almost K\"{a}hlerian manifold.
\end{theorem}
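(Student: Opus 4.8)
The plan is to establish the two defining properties of an almost K\"{a}hlerian structure. Since $J^2=-I$ has already been verified, it remains to check that $(G,J)$ is almost Hermitian, that is $G(JX,JY)=G(X,Y)$, and that the associated fundamental $2$-form
\[
\Omega(X,Y):=G(JX,Y)
\]
is closed. The compatibility of $G$ and $J$ is a direct verification on the adapted frame $(\delta_i,\dot{\partial}^i)$. Using $G(\delta_i,\delta_j)=G_{ij}$, $G(\dot{\partial}^i,\dot{\partial}^j)=G^{ij}$, the orthogonality $G(\delta_i,\dot{\partial}^j)=0$, and the inverse relation $G_{ik}G^{kj}=\delta^j_i$, I would compute $G(J\delta_i,J\delta_j)=G_{ik}G_{jl}G^{kl}=G_{ij}$ and $G(J\dot{\partial}^i,J\dot{\partial}^j)=G^{ik}G^{jl}G_{kl}=G^{ij}$, the mixed term vanishing by the orthogonality of the horizontal and vertical distributions. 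This shows $(T^{\ast}M_0,G,J)$ is almost Hermitian.

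The substantive step is to compute $\Omega$ and recognize it. Evaluating on the adapted frame, orthogonality gives $\Omega(\delta_i,\delta_j)=G(G_{ik}\dot{\partial}^k,\delta_j)=0$ and likewise $\Omega(\dot{\partial}^i,\dot{\partial}^j)=0$, while the mixed components are $\Omega(\delta_i,\dot{\partial}^j)=G(G_{ik}\dot{\partial}^k,\dot{\partial}^j)=G_{ik}G^{kj}=\delta^j_i$. In terms of the dual coframe $(dx^i,\delta p_i)$ this is exactly
\[
\Omega=dx^i\wedge\delta p_i.
\]

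To finish I would return to the natural coframe using $\delta p_i=dp_i-N_{ji}dx^j$, obtaining
\[
\Omega=dx^i\wedge dp_i-N_{ji}\,dx^i\wedge dx^j.
\]
Here the symmetry $N_{ij}=N_{ji}$ of Miron's nonlinear connection is decisive: the contraction of the symmetric array $N_{ji}$ against the skew form $dx^i\wedge dx^j$ vanishes, so that $\Omega=dx^i\wedge dp_i=-d(p_i\,dx^i)=-d\omega$. Thus $\Omega$ coincides, up to sign, with the canonical symplectic form $d\omega$ of $T^{\ast}M$, and in particular $d\Omega=-d(d\omega)=0$. Combined with the almost Hermitian property, the closedness of the fundamental form is precisely the assertion that $(T^{\ast}M_0,G,J)$ is almost K\"{a}hlerian.

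I expect no genuine obstacle here: the computations are routine, and the result really turns on the single observation that $\Omega$ reduces to the tautological symplectic form. The only place where care is needed is the passage $\delta p_i\mapsto dp_i$, where the symmetry of the coefficients $N_{ij}$ must be invoked; without it the spurious $N$-term would survive and closedness could fail.
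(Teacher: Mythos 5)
Your proposal is correct and follows essentially the same route as the paper: verify $G(JX,JY)=G(X,Y)$ on the adapted frame using $G_{ik}G^{kj}=\delta^j_i$, then identify the fundamental $2$-form with the canonical symplectic form $\pm\, d\omega$ (the sign difference comes only from your convention $\Omega(X,Y)=G(JX,Y)$ versus the paper's $\theta(X,Y)=G(X,JY)$). You additionally spell out the step the paper leaves implicit, namely that $\delta p_i\wedge dx^i=dp_i\wedge dx^i$ because the Miron connection coefficients satisfy $N_{ij}=N_{ji}$, which is exactly the justification needed for closedness.
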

\begin{proof} Since the matrix $(H^{kl} )$ is the inverse of the
matrix $(G_{ij} ),$ then we have
\[
G(J\delta_i,J\delta_j)=G_{ik}G_{jr}G(\dot{\partial}^k,\dot{\partial}^r)=G_{ik}G_{jr}G^{kr}=G_{ij}=G(\delta_i,\delta_j).
\]
The relations
\[
G(J\dot{\partial}^i,J\dot{\partial}^j)=G(\dot{\partial}^i,\dot{\partial}^j),\
\ \ \ G(J\delta _i,J\dot{\partial}^j)=G(\delta
_i,\dot{\partial}^j)=0,
\]
may be obtained in a similar way, thus
\[
G(JX,JY)=G(X,Y),\, \, \, \, \, \, \, \forall X,Y\in
\Gamma(T^{\ast}M_0).
\]
This means that $G$ is almost Hermitian with respect to $J$. The
fundamental 2-form associated by this almost K\"{a}hler structure is
$\Omega$, defined by
\[
\theta(X,Y):=G(X,JY),\, \, \, \, \forall X,Y\in \Gamma
(T^{\ast}M_0).
\]
Then we get
\[
\theta(\dot{\partial}^i,\delta_j)=G(\dot{\partial}^i,J\delta_j)=G(\dot{\partial}^i,G_{jk}\dot{\partial}^k)=G^{ik}G_{jk}=\delta^i_j,
\]
 and
\[
\theta(\delta_i,\delta_j)=\theta(\dot{\partial}^i,\dot{\partial}^j)=0.
\]
Hence, we have
\begin{equation}
\theta=\delta p_i\wedge dx^i,
\end{equation}
that is the canonical symplectic form of $T^{\ast}M$.
\end{proof}

Here, we study the integrability of the almost complex structure
defined by $J$ on $T^{\ast}M$. To do this, we need the following lemma.

\begin{lemma}\rm{(\cite{MirHri}\cite{PT})}
Let $(M,K)$ be a Cartan space.  Then we have the following:\\
$(1)\ \ [\delta_i,\delta_j]=R_{kij}{\dot{\partial}}^k$,\\
$(2)\ \ [\delta_i,{\dot{\partial}}^j]=-({\dot{\partial}}^jN_{ik}){\dot{\partial}}^k$,\\
$(3)\ \ [{\dot{\partial}}^i,{\dot{\partial}}^j]=0$,\\
where $R_{kij}=\delta_jN_{ki}-\delta_iN_{kj}$.
\end{lemma}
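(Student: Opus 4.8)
The plan is to compute each bracket directly in the natural (holonomic) coordinate frame $(\partial_i,\dot{\partial}^j)$ of $TT^{\ast}M$, whose pairwise Lie brackets all vanish, and then to substitute the definition $\delta_i=\partial_i+N_{ij}\dot{\partial}^j$. Assertion $(3)$ is then immediate: the $\dot{\partial}^i=\partial/\partial p_i$ are coordinate vector fields along the fibres, so $[\dot{\partial}^i,\dot{\partial}^j]=0$. This fact is also the workhorse for simplifying the other two brackets, since every term in which one $\dot{\partial}$ differentiates another will drop out.

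For $(2)$ I would expand $[\delta_i,\dot{\partial}^j]=[\partial_i+N_{ik}\dot{\partial}^k,\dot{\partial}^j]$ by bilinearity. The term $[\partial_i,\dot{\partial}^j]$ vanishes because the frame is holonomic, and in $[N_{ik}\dot{\partial}^k,\dot{\partial}^j]$ the Leibniz rule gives $N_{ik}[\dot{\partial}^k,\dot{\partial}^j]-(\dot{\partial}^j N_{ik})\dot{\partial}^k$; the first summand dies by $(3)$, leaving exactly $-(\dot{\partial}^j N_{ik})\dot{\partial}^k$, as claimed.

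The only substantial computation is $(1)$. Writing $\delta_i=\partial_i+N_{im}\dot{\partial}^m$ and $\delta_j=\partial_j+N_{jk}\dot{\partial}^k$, I would expand $[\delta_i,\delta_j]$ into four brackets and organize the cancellations: the horizontal bracket $[\partial_i,\partial_j]$ vanishes; the purely second-order vertical terms $N_{im}N_{jk}\dot{\partial}^m\dot{\partial}^k$ cancel against their $i\leftrightarrow j$ counterparts because the fibre partials commute; and the mixed terms such as $N_{jk}\partial_i\dot{\partial}^k$ cancel once one uses $\partial_i\dot{\partial}^k=\dot{\partial}^k\partial_i$. What survives is $(\partial_i N_{jk}-\partial_j N_{ik})\dot{\partial}^k$ together with $(N_{im}\dot{\partial}^m N_{jk}-N_{jm}\dot{\partial}^m N_{ik})\dot{\partial}^k$. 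The final, and only conceptual, step is to reassemble these into horizontal derivatives: since $\partial_i N_{jk}+N_{im}\dot{\partial}^m N_{jk}=\delta_i N_{jk}$, the coefficient of $\dot{\partial}^k$ equals $\delta_i N_{jk}-\delta_j N_{ik}$, which, upon invoking the symmetry $N_{ij}=N_{ji}$ and up to the index convention of the definition, is the curvature coefficient $R_{kij}=\delta_j N_{ki}-\delta_i N_{kj}$, giving $(1)$.

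The hard part is not any single calculation but the disciplined index bookkeeping in $(1)$: correctly pairing the dummy summation indices, verifying the three distinct families of cancellations, and recognizing that the leftover first-order terms repackage into the operator $\delta_i$ so that the nonlinear-connection curvature $R_{kij}$ emerges in the stated form.
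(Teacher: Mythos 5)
The paper offers no proof of this lemma at all --- it is imported verbatim from \cite{MirHri} and \cite{PT} --- so there is nothing to compare against except the standard direct computation, which is exactly what you carry out. Your treatment of $(3)$, of $(2)$ via the Leibniz identity $[f X,Y]=f[X,Y]-(Yf)X$, and of the three families of cancellations in $(1)$ (holonomic base bracket, commuting mixed second--order terms, commuting fibre second--order terms) is all correct, and the reassembly $\partial_i N_{jk}+N_{im}\dot{\partial}^m N_{jk}=\delta_i N_{jk}$ is the right conceptual step.

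The one genuine loose end is the final identification in $(1)$, which you dismiss with ``up to the index convention.'' Your computation gives
\[
[\delta_i,\delta_j]=\bigl(\delta_i N_{jk}-\delta_j N_{ik}\bigr)\dot{\partial}^k ,
\]
and since the canonical nonlinear connection of a Cartan space satisfies $N_{ab}=N_{ba}$, this coefficient equals $\delta_i N_{kj}-\delta_j N_{ki}=-\bigl(\delta_j N_{ki}-\delta_i N_{kj}\bigr)=-R_{kij}$ with the definition printed in the lemma. Relabelling the indices of a \emph{symmetric} tensor can never produce a sign, so the phrase ``up to the index convention'' does not close this gap: as stated, your derivation proves $[\delta_i,\delta_j]=-R_{kij}\dot{\partial}^k$. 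The discrepancy is almost certainly a sign slip in the quoted statement (the formula is consistent if one reads $R_{kij}=\delta_i N_{kj}-\delta_j N_{ki}$), but you should either say so explicitly and adopt that convention, or keep the printed definition and correct the sign in $(1)$; you cannot leave the mismatch unacknowledged, since the sign of $R_{kij}$ is used substantively later in the paper (e.g.\ in the constant--curvature identity $R_{kij}=c(g_{jk}p_i-g_{ik}p_j)$).
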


\smallskip

\begin{theorem}\label{3.3}
Let $(M, K)$ be a Cartan space. Then $J$ is a complex structure on
$T^{\ast}M_0$ if and only if $A_{kij}=0$ and
\begin{equation}
R_{kij}=\frac{v}{\alpha\beta^2}(g_{ik}p_j-g_{jk}p_i),\label{2}
\end{equation}
where $A_{kij}=\delta_i G_{jk}-\delta_j
G_{ik}+G_{ir}{\dot{\partial}}^rN_{jk}-G_{jr}{\dot{\partial}}^rN_{ik}$.
\end{theorem}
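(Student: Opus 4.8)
The plan is to invoke the Newlander--Nirenberg theorem: $J$ is an integrable complex structure exactly when its Nijenhuis tensor
\[
N_J(X,Y) = [JX,JY] - J[JX,Y] - J[X,JY] - [X,Y]
\]
vanishes identically. Before computing anything I would cut down the work. Since $N_J$ is $C^\infty$-bilinear and every almost complex structure satisfies $N_J(X,JY)=-JN_J(X,Y)$ and $N_J(JX,JY)=-N_J(X,Y)$, and since the defining relations of $J$ give $\dot{\partial}^j=G^{jk}J\delta_k$, one obtains
\[
N_J(\delta_i,\dot{\partial}^j)=-G^{jk}\,JN_J(\delta_i,\delta_k),\qquad
N_J(\dot{\partial}^i,\dot{\partial}^j)=-G^{ik}G^{jl}N_J(\delta_k,\delta_l).
\]
Hence $N_J\equiv 0$ if and only if $N_J(\delta_i,\delta_j)=0$ for all $i,j$, so it suffices to evaluate the Nijenhuis tensor on horizontal fields.

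Next I would compute $N_J(\delta_i,\delta_j)$ from $J(\delta_i)=G_{ik}\dot{\partial}^k$, $J(\dot{\partial}^i)=-G^{ik}\delta_k$ and the three bracket formulas of the Lemma, expanding each bracket via $[fX,gY]=fg[X,Y]+f(Xg)Y-g(Yf)X$. Because $J$ interchanges the horizontal and vertical distributions, the four terms split cleanly: a horizontal part from $-J[J\delta_i,\delta_j]-J[\delta_i,J\delta_j]$ and a vertical part from $[J\delta_i,J\delta_j]-[\delta_i,\delta_j]$. After relabelling summation indices the horizontal part collapses to
\[
\big(\delta_i G_{jk}-\delta_j G_{ik}+G_{ir}\dot{\partial}^rN_{jk}-G_{jr}\dot{\partial}^rN_{ik}\big)G^{km}\delta_m
= A_{kij}\,G^{km}\delta_m ,
\]
which vanishes for all $i,j$ if and only if $A_{kij}=0$, since $(G^{km})$ is invertible.

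Then I would treat the vertical part, which reduces to $\big(G_{ik}\dot{\partial}^kG_{jl}-G_{jk}\dot{\partial}^kG_{il}-R_{lij}\big)\dot{\partial}^l$. Here I would substitute $G_{ij}=\tfrac1\beta g_{ij}+\tfrac{v(\tau)}{\alpha\beta}p_ip_j$ and differentiate, using $\dot{\partial}^k\tau=p^k$, $\dot{\partial}^kp_j=\delta^k_j$, $g_{ik}p^k=p_i$, $p_kp^k=2\tau$, together with $\dot{\partial}^kg_{jl}=2g_{ja}g_{lb}C^{abk}$. Antisymmetrizing in $i,j$, the symmetric blocks all disappear: the lowered Cartan terms (since $C_{ijl}$ is totally symmetric), the $v'(\tau)\,p_ip_jp_l$ terms, the symmetric $g_{ij}p_l$ block, and the $v^2$ and $\tau vv'$ terms, while the cross terms carrying $p_kC_{jl}{}^k$ vanish by $C^{abk}p_k=0$. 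What survives is precisely $\tfrac{v}{\alpha\beta^2}\big(g_{il}p_j-g_{jl}p_i\big)$, so the vertical part vanishes if and only if $R_{lij}=\tfrac{v}{\alpha\beta^2}(g_{il}p_j-g_{jl}p_i)$, i.e.\ (\ref{2}). Together with the horizontal computation this gives the stated equivalence.

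The main obstacle I anticipate is the bookkeeping in this last step: one must carry the full expansion of $\dot{\partial}^kG_{jl}$, including both the $v'(\tau)$ contribution and the Cartan-tensor contribution $\dot{\partial}^kg_{jl}$, and verify that after antisymmetrization every term except the single $g\,p$ block cancels. These cancellations rely entirely on the homogeneity identity $C^{ijk}p_k=0$ and on the total symmetry of $p_ip_jp_l$ and of the lowered Cartan tensor, so the delicate point is to organize the $O(v')$, $O(v^2)$ and $O(\tau vv')$ contributions so that their antisymmetric parts are manifestly seen to vanish.
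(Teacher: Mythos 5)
Your proposal is correct and follows essentially the same route as the paper: evaluate $N_J$ on $(\delta_i,\delta_j)$, read off the horizontal part as $A_{lij}G^{lm}\delta_m$ and the vertical part as $(B_{lij}-R_{lij})\dot{\partial}^l$ with $B_{kij}=G_{ir}\dot{\partial}^rG_{jk}-G_{jr}\dot{\partial}^rG_{ik}$, and show by the homogeneity identity $C^{ijk}p_k=0$ and symmetry that $B_{kij}=\frac{v}{\alpha\beta^2}(g_{ik}p_j-g_{jk}p_i)$. Your use of the algebraic identities $N_J(X,JY)=-JN_J(X,Y)$ and $N_J(JX,JY)=-N_J(X,Y)$ to reduce everything to the horizontal-horizontal case is a slightly cleaner justification of what the paper dismisses as ``a straightforward computation,'' but it is not a different proof.
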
\label{th1}
\begin{proof}
Using the definition of the Nijenhuis tensor field $N_J$ of $J$, that is,
\[
N_J(X,Y)=[JX,JY]-J[JX,Y]-J[X,JY]-[X,Y],\ \ \ \forall X, Y\in
\Gamma(T^{\ast}M)
\]
we get
\begin{equation}
N_J(\delta_i,\delta_j)=A_{hij}H^{hk}\delta_k+
(B_{kij}-R_{kij}){\dot{\partial}}^k,\label{Nij1}
\end{equation}
where
$B_{kij}=G_{ir}{\dot{\partial}}^rG_{jk}-G_{jr}{\dot{\partial}}^rG_{ik}$. Let $C_{jk}^r:=g_{jl}g_{sk}C^{rls}$, then we have
\[
{\dot{\partial}}^rg_{jk}=-g_{jl}g_{sk}{\dot{\partial}}^rg^{ls}=2g_{jl}g_{sk}C^{rls}=2C_{jk}^r.
\]
By the above equation, we obtain
\begin{equation}
G_{ir}{\dot{\partial}}^rG_{jk}=\frac{2}{\beta^2}C_{ijk}
+\frac{v}{\alpha\beta^2}(g_{ji}p_k+g_{ik}p_j)
+(\frac{v^{\prime}}{\alpha\beta^2}+\frac{2vv^{\prime}\tau}{\alpha\beta}+\frac{2v^2}{\alpha^2\beta^2})p_ip_jp_k.\label{Nij2}
\end{equation}
where $C_{ijk}=g_{ir}C_{jk}^r$. From (\ref{Nij2}) we get
\begin{equation}
B_{kij}=\frac{v}{\alpha\beta^2}(g_{ik}p_j-g_{jk}p_i).\label{Nij3}
\end{equation}
By a straightforward computation, it follows that
$N_J({\dot{\partial}}^i,{\dot{\partial}}^j)=0,
N_J({\dot{\partial}}^i,\delta_j)=0$, whenever
$N_j(\delta_i,\delta_j)=0$. Therefore, from relations (\ref{Nij1})
and (\ref{Nij3}), we conclude that the necessary and sufficient
conditions for  vanishing the Nijenhuis tensor field $N_J$ is obtained,  so that J
is a complex structure. Thus $A_{kij}=0$ and (\ref{2}) hold.
\end{proof}

\bigskip

In equation (\ref{2}),  we put $-\frac{v}{\alpha\beta^2}=c$, where $c$
is constant. Then we get
\begin{equation}
R_{kij}=c(g_{jk}p_i-g_{ik}p_j).\label{3}
\end{equation}
\begin{theorem}\label{3.4}
Let $(M,K)$ be a Cartan space of dimension $n\geq 3$. Then the
almost complex structure $J$ on $T^{\ast}M_0$ is integrable if and
only if (\ref{3}) is hold and the function $v$ is given by
\begin{equation}
v=-c\alpha\beta^2.\label{constant}
\end{equation}
\end{theorem}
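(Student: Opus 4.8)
The plan is to read the statement off Theorem~\ref{3.3}. That theorem says $J$ is integrable precisely when $A_{kij}=0$ and (\ref{2}) hold, and with the abbreviation $c=-v/(\alpha\beta^2)$ the curvature equation (\ref{2}) is literally (\ref{3}); so the \emph{only} thing left to prove is that integrability forces the coefficient $c$—equivalently the function $v$—to be a genuine constant, i.e.\ (\ref{constant}). A preliminary remark is that $A_{kij}$ does not involve $v$ at all. Expanding $\delta_iG_{jk}-\delta_jG_{ik}$ with (\ref{6}), $\delta_i\tau=0$, $\delta_ip_j=N_{ij}$, $N_{jk}=p_sH^s_{jk}$ and Euler's relation $p_r\dot\partial^rN_{jk}=N_{jk}$, all the $H$-terms and all the $v$-terms cancel, and $A_{kij}$ collapses to $-\frac1\beta(g_{ir}P^r_{jk}-g_{jr}P^r_{ik})$. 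Thus $A_{kij}=0$ is one and the same $v$-independent condition on both sides of the asserted equivalence and is simply carried over from Theorem~\ref{3.3}; the real content is the determination of $v$.

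For that step I would argue by homogeneity. Since $K$ is positively homogeneous of degree one in the momenta, the Miron coefficients $N_{ij}$ are homogeneous of degree one, and hence so is $R_{kij}=\delta_jN_{ki}-\delta_iN_{kj}$. In (\ref{2}) the factor $g_{ik}p_j-g_{jk}p_i$ is likewise homogeneous of degree one in $p$, while $v=v(\tau)$ with $\tau$ homogeneous of degree two. Evaluating (\ref{2}) at $p$ and at $\lambda p$ $(\lambda>0)$ and cancelling the common degree-one factor yields $v(\lambda^2\tau)=v(\tau)$ at every $(x,p)$ where $g_{ik}p_j-g_{jk}p_i\ne0$ for some $i,j,k$. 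Positive-definiteness of $g_{ij}$ guarantees such indices exist whenever $p\ne0$ (a rank-one $g$ being excluded for $n\ge2$), and along any ray $\tau$ exhausts $(0,\infty)$; hence $v$ is constant on $[0,\infty)$. Writing that constant as $-c\alpha\beta^2$ gives (\ref{constant}) and turns (\ref{2}) into (\ref{3}).

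To explain the dimension hypothesis and connect with standard terminology, I would also record the contraction of (\ref{3}) by $g^{ik}$, namely $g^{ik}R_{kij}=-(n-1)\,c\,p_j$: for $n\ge3$ the factor $n-1$ is nonzero, so the contracted curvature genuinely detects $c$. Equation (\ref{3}) is exactly the Cartan-space form of the constant-flag-curvature equation, for which $n\ge3$ is the classical Schur hypothesis under which a pointwise-isotropic curvature is forced to be constant; the homogeneity computation above is the sharp realization of Schur's conclusion in this purely $\tau$-dependent setting.

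The converse is a direct substitution: if (\ref{3}) holds with $c$ constant and $v=-c\alpha\beta^2$, then $\frac{v}{\alpha\beta^2}(g_{ik}p_j-g_{jk}p_i)=c(g_{jk}p_i-g_{ik}p_j)=R_{kij}$, so (\ref{2}) holds; together with the inherited condition $A_{kij}=0$ these are precisely the hypotheses of Theorem~\ref{3.3}, whence $J$ is integrable. I expect the only delicate point to be the rigorous form of the ``$v$ is constant'' step: one must ensure the locus $g_{ik}p_j-g_{jk}p_i\ne0$ is nonempty along a full ray (handled by non-degeneracy of $g$) and that the $v$-independent condition $A_{kij}=0$ really is matched on both sides of the equivalence, so that suppressing it in the statement is harmless.
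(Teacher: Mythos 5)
Your proposal follows the same route as the paper --- reduce everything to Theorem~\ref{3.3}, dispose of the condition $A_{kij}=0$, and identify the coefficient in (\ref{2}) with a constant --- and in one respect it is more complete than the paper's own argument: the paper simply \emph{sets} $v=-c\alpha\beta^2$ with $c$ constant, whereas you derive the constancy of $v$ from the $1$-homogeneity in the momenta of $R_{kij}$ and of $g_{ik}p_j-g_{jk}p_i$, which is exactly the step needed to make the ``only if'' direction rigorous. The one place where you stop short is the treatment of $A_{kij}$. Your computation correctly cancels the $v$-terms (via $\delta_i\tau=0$, $\delta_ip_j=N_{ij}$ and Euler's relation) and the $H$-terms (via (\ref{6})), reducing $A_{kij}$ to $-\frac{1}{\beta}\bigl(g_{ir}P^r_{jk}-g_{jr}P^r_{ik}\bigr)$; but you then describe $A_{kij}=0$ as a condition ``matched on both sides'' and leave it as a delicate point. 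That framing is not adequate: Theorem~\ref{3.4} does not list $A_{kij}=0$ among its hypotheses, so the stated equivalence is true only if $A_{kij}$ vanishes identically for every Cartan space. It does, because the tensor $P_{ijk}:=g_{ir}P^r_{jk}$ of the canonical metrical connection is symmetric in its first two indices (this is precisely what the paper invokes as part (iii) of Proposition 2.3 in chapter 7 of \cite{MirHri}), so your reduced expression is the antisymmetrization of $P_{ijk}$ in $i,j$ and is zero. With that one citation your argument closes completely; your observation that the proof never actually uses $n\ge3$ (homogeneity only needs $n\ge2$) also matches the paper, whose own proof makes no use of the dimension hypothesis.
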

\begin{proof}
From equation $p_{i|k}=0$ of relation (\ref{4}), we conclude that
$\delta_ip_k=N_{ik}$. Hence the tensor field $A_{kij}$ takes the
form $A_{kij}=\delta_i g_{jk}-\delta_j
g_{ik}+g_{ir}{\dot{\partial}}^rN_{jk}-g_{jr}{\dot{\partial}}^rN_{ik}$
and by part $(iii)$ of Proposition 2.3 in chapter 7 of \cite{MirHri}, it
vanishes for Cartan spaces. Now we let $v=-c\alpha\beta^2$, then
from equation $A_{kij}=0$ and Theorem \ref{3.3}, we conclude that $J$
is integrable if and only if (\ref{3}) is hold.
\end{proof}
A Cartan space $K^n$ is of constant scalar $c$ if
\begin{equation}
H_{hijk}p^ip^jX^hX^k=c(g_{hj}g_{ik}-g_{hk}g_{ij})p^ip^jX^hX^k,\label{7}
\end{equation}
for every $(x,p)\in T^{\ast}_0M$ and $X=(X^i)\in T_xM$. Here
$H_{hijk}$ is the (hh)h-curvature of the linear Cartan connection of
$K^n$.

We replace $H_{hijk}$ in (\ref{7}) with $g_{is}H^s_{hjk}$ and so it
reduce to
\begin{equation}
p_sH^s_{hjk}p^jX^hX^k=c(p_hp_k-K^2g_{hk})X^hX^k.\label{8}
\end{equation}
By part (ii) of Proposition 5.1 in chapter 7 of \cite{MirHri},
$p_sH^s_{hjk}=-R_{hjk}$, hence we get
\[
R_{hjk}p^jX^hX^k=c(K^2g_{hk}-p_hp_k)X^hX^k,
\]
or equivalently
\begin{equation}
R_{hjk}p^j=c(K^2g_{hk}-p_hp_k),\label{8}
\end{equation}
because $(X^h)$ and $X^k$ are arbitrary vector fields on $M$. It is
easy to check that (\ref{8}) follows from (\ref{3}). Hence we
conclude that if (\ref{3}) is hold, then Cartan space $K^n$ has the
constant scalar curvature $c$.
\begin{theorem}\label{THM1}
Let $(M,K)$ be a Cartan space with constant flag curvature $c$. Suppose that $v$ is given by (\ref{constant}). Then\\
$(i)$ for negative constant $c$, structure $(T^{\ast}M_0,G,J)$ is a K\"{a}hler manifold;\\
$(ii)$ for positive constant $c$, the tube around the zero section
in $T^{\ast}M$, defined by the condition $2\tau=K^2<\frac{1}{c\beta^2}$,  is a K\"{a}hler manifold.
\end{theorem}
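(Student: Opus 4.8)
The plan is to read off a K\"{a}hler structure as an almost Hermitian structure that is simultaneously symplectic (closed fundamental $2$-form) and complex (integrable $J$), and to obtain each of these two ingredients from results already established. The first theorem of this section shows that $(T^{\ast}M_0,G,J)$ is almost Hermitian and that its fundamental $2$-form coincides with the canonical symplectic form $\delta p_i\wedge dx^i$; being the canonical symplectic form it is automatically closed, so the symplectic ingredient is free and the whole problem collapses to verifying that $J$ is integrable on the appropriate domain.

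For integrability I would apply Theorem \ref{3.4}, according to which $J$ is a complex structure exactly when (\ref{3}) holds and $v$ equals (\ref{constant}). The hypothesis of the present statement hands us $v=-c\alpha\beta^2$ directly, so the only thing left to supply is (\ref{3}). This is precisely where constant flag curvature enters: for a Cartan space of constant flag curvature $c$ the curvature of the Miron nonlinear connection has the tensorial form $R_{kij}=c(g_{jk}p_i-g_{ik}p_j)$, which is (\ref{3}); the discussion preceding the statement already records that this form is consistent, since its contraction with $p^j$ reproduces the scalar-curvature identity $R_{hjk}p^j=c(K^2g_{hk}-p_hp_k)$. With both conditions of Theorem \ref{3.4} in force, $J$ is integrable, and combined with the closed fundamental $2$-form this delivers a K\"{a}hler structure wherever $G$ is a bona fide Riemannian metric.

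The step that actually separates cases (i) and (ii) is the positivity of $G$. Recall that $(G_{ij})$ is positive definite iff $\alpha,\beta>0$ and $\alpha+2\tau v>0$, and that positive definiteness of $(G_{ij})$ propagates to the inverse block $(G^{ij})$ and hence to all of $G$. Substituting $v=-c\alpha\beta^2$ and $2\tau=K^2$ gives
\[
\alpha+2\tau v=\alpha\bigl(1-c\beta^2K^2\bigr).
\]
For $c<0$ the factor $1-c\beta^2K^2=1+|c|\beta^2K^2$ is positive throughout $T^{\ast}M_0$, so $G$ is Riemannian everywhere and $(T^{\ast}M_0,G,J)$ is K\"{a}hler, which is (i). For $c>0$ the factor is positive exactly when $K^2<\tfrac{1}{c\beta^2}$, i.e. on the tube $2\tau=K^2<\tfrac{1}{c\beta^2}$ around the zero section; restricting to this tube makes $G$ Riemannian and produces the K\"{a}hler structure of (ii).

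I expect the main obstacle to be the geometric translation of the constant-flag-curvature hypothesis into the relation (\ref{3}) for $R_{kij}$, that is, expressing the constancy of the flag curvature through the curvature of the nonlinear connection; once this identification is secured, the integrability criterion of Theorem \ref{3.4}, the closedness inherited from the first theorem, and the elementary positivity computation isolating the tube are all routine. A minor point to keep in view is that Theorem \ref{3.4} was proved under $n\geq 3$, so this dimension restriction should be understood as part of the hypotheses here as well.
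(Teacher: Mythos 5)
Your proposal is correct and follows essentially the same route as the paper: the paper's (very terse) proof likewise combines the integrability criterion of Theorem \ref{3.4} with the positivity condition $\alpha+2\tau v=\alpha(1-2c\beta^{2}\tau)>0$, which holds on all of $T^{\ast}M_0$ when $c<0$ and exactly on the tube $2\tau<\frac{1}{c\beta^{2}}$ when $c>0$. Your writeup is in fact more explicit than the paper's, which leaves both the closedness of the fundamental $2$-form and the passage from constant flag curvature to (\ref{3}) implicit.
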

\begin{proof}
The function  $uv$  must satisfies in the following condition
\begin{equation}
\alpha +2\tau u=\alpha (1+2(-c)\beta ^{2} \tau )>0,\, \, \, \, \,
\alpha ,\beta >0.
\end{equation}
By using the above relation and Theorem \ref{3.4}, we complete the proof.
\end{proof}

\bigskip

By attention to above theorem, the components of the K\"{a}hler
metric $G$ on $TM_0$ are given by
\begin{equation}
\left\{\begin{array}{l} {G_{ij} =\frac{1}{\beta } g_{ij} -c\beta
y_{i}y_{j} ,} \\ {H_{ij} =\beta g_{ij}+\frac{c\beta ^{3} }{1-2c\beta
^{2} \tau } y_{i}y_{j} .}
\end{array}\right.\label{Kahler}
\end{equation}
\section{A K\"{a}hler Einstein Structure on Cotangent Bundle}
In this section, we study the property of $(T^{\ast}M_0,G)$ to be
Einstein manifold.  We find the expression of the Levi-Civita connection $\nabla$ of the metric $G$ on $T^{\ast}M_0$. Then we get  the curvature tensor field of $\nabla$, and  by computing the corresponding traces we find the components of Ricci tensor field of   $\nabla$.
\subsection{The Levi-Civita connection of $G$}
Here we determine the Levi-Civita connection of the K\"{a}hler metric $G$.
Recall that for Cartan space with Cartan connection, the relation
$P^j_{ik}=H^j_{ik}-{\dot{\partial}}^jN_{ik}$ is hold, and so we
have
$[\delta_i,{\dot{\partial}}^j]=(P^j_{ik}-H^j_{ik}){\dot{\partial}}^k$.
Also the Levi-Civita connection $\nabla$ of the Riemannian
manifold $(T^{\ast}M_0,G)$ is obtained from the formula
\begin{eqnarray}
2G({\nabla }_{X} Y,Z)\!\!\!\!&=&\!\!\!\!X(G(Y,Z))+Y(G(X,Z))-Z(G(X,Y))\nonumber\\
\!\!\!\!&+&\!\!\!\!G([X,Y],Z)-G([X,Z],Y)-G([Y,Z],X),\, \, \, \, \,
\forall X,Y,Z\in \Gamma(T^{\ast}M_0),\label{Levi}
\end{eqnarray}
and is characterized by the conditions $\nabla G=0$ and $T=0$, where
$T$ is the torsion tensor of $\nabla$. By the above equation we have
\begin{eqnarray}
2G(\nabla_{{\dot{\partial}}^i}{\dot{\partial}}^j,{\dot{\partial}}^k)
\!\!\!\!&=&\!\!\!\!{\dot{\partial}}^iG({\dot{\partial}}^j,{\dot{\partial}}^k)
+{\dot{\partial}}^jG({\dot{\partial}}^i,{\dot{\partial}}^k)-{\dot{\partial}}^kG({\dot{\partial}}^i,{\dot{\partial}}^j)\nonumber\\
\!\!\!\!&=&\!\!\!\!{\dot{\partial}}^i(\beta
g^{jk}+\frac{c\beta^3}{1-2c\beta^2\tau}p^jp^k)+{\dot{\partial}}^j(\beta
g^{ik}+\frac{c\beta^3}{1-2c\beta^2\tau}p^ip^k)\nonumber\\
\!\!\!\!&-&\!\!\!\!{\dot{\partial}}^k(\beta
g^{ij}+\frac{c\beta^3}{1-2c\beta^2\tau}p^ip^j)\nonumber\\
\!\!\!\!&=&\!\!\!\!
2\beta(-C^{ijk}+\frac{c^2\beta^4}{(1-2c\beta^2\tau)^2}p^ip^jp^k+\frac{c\beta^2}{1-2c\beta^2\tau}g^{ij}p^k).\label{con7}
\end{eqnarray}
Also, we obtain
\begin{eqnarray*}
2G(\nabla_{{\dot{\partial}}^i}{\dot{\partial}}^j,\delta_k)
\!\!\!\!&=&\!\!\!\!-\delta_kG({\dot{\partial}}^i,{\dot{\partial}}^j)
-G([{\dot{\partial}}^i,\delta_k],{\dot{\partial}}^j)-G([{\dot{\partial}}^j,\delta_k],{\dot{\partial}}^i)\nonumber\\
\!\!\!\!&=&\!\!\!\!-\delta_k(\beta
g^{ij}+\frac{c\beta^3}{1-2c\beta^2\tau}p^ip^j)-{\dot{\partial}}^iN_{kr}(\beta
g^{rj}+\frac{c\beta^3}{1-2c\beta^2\tau}p^rp^j)\\
\!\!\!\!&-&\!\!\!\!{\dot{\partial}}^jN_{kr}(\beta
g^{ri}+\frac{c\beta^3}{1-2c\beta^2\tau}p^rp^i)\nonumber\\
\!\!\!\!&=&\!\!\!\!-\delta_k(\beta
g^{ij}+\frac{c\beta^3}{1-2c\beta^2\tau}p^ip^j)+(P^i_{kr}-H^i_{kr})(\beta
g^{rj}+\frac{c\beta^3}{1-2c\beta^2\tau}p^rp^j)\\
\!\!\!\!&+&\!\!\!\!(P^j_{kr}-H^j_{kr})(\beta
g^{ri}+\frac{c\beta^3}{1-2c\beta^2\tau}p^rp^i).
\end{eqnarray*}
Since ${g^{ij}}_{|k}=0$ and $\delta_kp^i=-p^rH^i_{rk}$ are hold for Cartan connection, then  by the above equation we get
\begin{eqnarray}
2G(\nabla_{{\dot{\partial}}^i}{\dot{\partial}}^j,\delta_k)\!\!\!\!&=&\!\!\!\!
\frac{c\beta^3}{1-2c\beta^2\tau}(H^i_{rk}p^j+H^j_{rk}p^i-H^i_{kr}p^j-H^j_{kr}p^i)p^r+
\beta(P^i_{kr}+P^j_{kr})g^{ri}-{g^{ij}}_{|k}\nonumber\\
 \!\!\!\!&=&\!\!\!\!\beta(P^i_{kr}g^{rj}+P^j_{kr}g^{ri}).\label{con8}
\end{eqnarray}
From (\ref{con7}) and (\ref{con8}), we obtain
\begin{equation}
\nabla_{{\dot{\partial}}^i}{\dot{\partial}}^j=\frac{\beta^2}{2}(P^i_{kr}g^{rj}+P^j_{kr}g^{ri})g^{hk}\delta_h+(-C^{ij}_h+c\beta
G^{ij}p_h){\dot{\partial}}^h.\label{con3}
\end{equation}
Since $G$ is a K\"{a}hler metric, then $(M,K)$ is of constant
curvature $c$, i.e., $R_{kij}=c(g_{jk}p_i-g_{ik}p_j)$. Hence, we
get
\begin{eqnarray}
2G(\nabla_{\delta_i}{\dot{\partial}}^j,\delta_k)\!\!\!\!&=&\!\!\!\!{\dot{\partial}}^jG(\delta_i,\delta_k)
-G([\delta_i,\delta_k],{\dot{\partial}}^j)\nonumber\\
\!\!\!\!&=&\!\!\!\!{\dot{\partial}}^j(\frac{1}{\beta}g_{ik}-c\beta p_ip_k)-R_{rik}(\beta g^{rj}+\frac{c\beta^3}{1-2c\beta^2\tau}p^rp^j)\nonumber\\
\!\!\!\!&=&\!\!\!\!(\frac{1}{\beta}{\dot{\partial}}^jg_{ik}-c\beta\delta^j_ip_k-c\beta\delta^j_kp_i)-(c\beta\delta^j_kp_i-c\beta\delta^j_ip_k)\nonumber\\
\!\!\!\!&=&\!\!\!\!2(\frac{1}{\beta}C^j_{ik}-c\beta\delta^j_kp_i),\label{con}
\end{eqnarray}
where $C^j_{ik}:=g_{il}g_{kh}C^{jlh}=g_{il}C^{jl}_k$. Since
${g^{jk}}_{|i}=0$ and $p^j_{|i}=p^k_{|i}=0$, then we have
\[
{G^{jk}}_{|i}=\beta{g^{jk}}_{|i}+\frac{c\beta^3}{1-2c\beta^2\tau}(p^j_{|i}p^k+p^k_{|i}p^j)=0.
\]
Hence, we obtain
\begin{eqnarray}
2G(\nabla_{\delta_i}{\dot{\partial}}^j,{\dot{\partial}}^k)\!\!\!\!&=&\!\!\!\!\delta_iG({\dot{\partial}}^j,{\dot{\partial}}^k)
+G([\delta_i,{\dot{\partial}}^j],{\dot{\partial}}^k)-G([\delta_i,{\dot{\partial}}^k],{\dot{\partial}}^j)\nonumber\\
\!\!\!\!&=&\!\!\!\!\delta_iG^{jk}+(P^j_{ir}-H^j_{ir})G^{rk}+(H^k_{ir}-P^k_{ir})G^{rj}\nonumber\\
\!\!\!\!&=&\!\!\!\!{G^{jk}}_{|i}-2G^{rk}H^j_{ir}+P^j_{ir}G^{rk}-P^k_{ir}G^{rj}\nonumber\\
\!\!\!\!&=&\!\!\!\!-2G^{rk}H^j_{ir}+P^j_{ir}G^{rk}-P^k_{ir}G^{rj}.\label{con1}
\end{eqnarray}
From (\ref{con}) and (\ref{con1}), we have
\begin{equation}
\nabla_{\delta_i}{\dot{\partial}}^j=(C^{jh}_i-c\beta
G^{jh}p_i)\delta_h+(\frac{1}{2}P^j_{ih}-\frac{1}{2}P^k_{ir}G^{rj}G_{kh}-H^j_{ih}){\dot{\partial}}^h.\label{con2}
\end{equation}
Also from equation
\[
\nabla_{{\dot{\partial}}^i}{\delta_j}-\nabla_{\delta_j}{\dot{\partial}}^i=[{\dot{\partial}}^i,\delta_j]
=(H^i_{jh}-P^i_{jh}){\dot{\partial}}^h,
\]
and (\ref{con2}) we obtain
\begin{equation}
\nabla_{{\dot{\partial}}^i}{\delta_j}=(C^{ih}_j-c\beta
G^{ih}p_j)\delta_h-\frac{1}{2}(P^i_{jh}+P^k_{jr}G^{ri}G_{kh}){\dot{\partial}}^h.\label{con4}
\end{equation}
Similarly, we have
\begin{eqnarray}
2G(\nabla_{\delta_i}\delta_j,{\dot{\partial}}^k)\!\!\!\!&=&\!\!\!\!{\dot{\partial}}^kG(\delta_i,\delta_j)
+G([\delta_i,\delta_j],{\dot{\partial}}^k)\nonumber\\
\!\!\!\!&=&\!\!\!\!-{\dot{\partial}}^kG_{ij}+R_{rij}G^{rk}\nonumber\\
\!\!\!\!&=&\!\!\!\!-{\dot{\partial}}^k(\frac{1}{\beta}g_{ij}-c\beta p_ip_j)+c\beta(g_{rj}p_i-g_{ri}p_j)( g^{rk}+\frac{c\beta^2}{1-2c\beta^2\tau}p^rp^k)\nonumber\\
\!\!\!\!&=&\!\!\!\!-\frac{2}{\beta}C^k_{ij}+2c\beta\delta^k_jp_i,\label{con5}
\end{eqnarray}
and
\begin{eqnarray}
2G(\nabla_{\delta_i}\delta_j,\delta_k)\!\!\!\!&=&\!\!\!\!\delta_iG(\delta_j,\delta_k)+\delta_jG(\delta_i,\delta_k)-\delta_kG(\delta_i,\delta_j)\nonumber\\
\!\!\!\!&=&\!\!\!\!\delta_iG_{jk}+\delta_jG_{ik}-\delta_kG_{ij}\nonumber\\
\!\!\!\!&=&\!\!\!\!G_{jr}H^r_{ik}+G_{rk}H^r_{ij}+G_{ir}H^r_{jk}+G_{rk}H^r_{ij}-G_{rj}H^r_{ik}-G_{ir}H^r_{jk}\nonumber\\
\!\!\!\!&=&\!\!\!\!2G_{rk}H^r_{ij}.\label{con6}
\end{eqnarray}
From (\ref{con5}) and (\ref{con6}), we conclude the following equation
\begin{equation}
\nabla_{\delta_i}\delta_j=H^h_{ij}\delta_h+(-\frac{1}{\beta^2}C_{ijh}+c\beta
G_{hj}p_i){\dot{\partial}}^h.\label{con9}
\end{equation}
\subsection{The Curvature Tensor}
Here, we are going to compute the curvature tensors of $\nabla$. Recall that the curvature
$K$ of $\nabla$ is obtained from the following relation
\begin{equation}
K(X,Y)Z={\nabla }_{X} {\nabla }_{Y} Z-{\nabla }_{Y}{\nabla }_{X}
Z-{\nabla }_{[X,Y]} Z,\ \ \  \forall
X,Y,Z\in\Gamma(TM).\label{Curvature1}
\end{equation}
Using (\ref{Curvature1}) we have
\begin{equation}
K({\dot{\partial}}^i,{\dot{\partial}}^j){\dot{\partial}}^k={\nabla
}_{{\dot{\partial}}^i} {\nabla
}_{{\dot{\partial}}^j}{\dot{\partial}}^k-{\nabla
}_{{\dot{\partial}}^j}{\nabla
}_{{\dot{\partial}}^i}{\dot{\partial}}^k .\label{Curvature2}
\end{equation}
By  (\ref{con3}) we get
\begin{eqnarray}
\nabla_{{\dot{\partial}}^i}\nabla_{{\dot{\partial}}^j}{\dot{\partial}}^k\!\!\!\!&=&\!\!\!\!
\frac{\beta^2}{2}{\dot{\partial}}^i((P^j_{lr}g^{rk}+P^k_{lr}g^{rj})g^{lh})\delta_h
+\frac{\beta^2}{2}(P^j_{lr}g^{rk}+P^k_{lr}g^{rj})g^{lh}\nabla_{{\dot{\partial}}^i}\delta_h\nonumber\\
\!\!\!\!&+&\!\!\!\!(-C^{jk,i}_h+c\beta G^{jk,i}p_h+c\beta
G^{jk}\delta^i_h){\dot{\partial}}^h+(-C^{jk}_h+c\beta
G^{jk}p_h)\nabla_{{\dot{\partial}}^i}{\dot{\partial}}^h.
\label{Curvature3}
\end{eqnarray}
Since $P^h_{rl}=H^h_{rl}-{\dot{\partial}}^hN_{rl}$,
$N_{rl}=p_hH^h_{rl}$ and $p_h{\dot{\partial}}^hN_{rl}=N_{rl}$, then
$p_hP^h_{rl}=0$. Also we have $p^lP^h_{rl}=p^rP^h_{rl}=0$ and
${\dot{\partial}}^kg^{ij}=-2C^{kij}$. Therefore the following
relation deduce from (\ref{Curvature3})
\begin{eqnarray}
\nabla_{{\dot{\partial}}^i}\nabla_{{\dot{\partial}}^j}{\dot{\partial}}^k\!\!\!\!&=&\!\!\!\!
\frac{\beta^2}{2}\Big[(P^{j,i}_{lr}g^{rk}+P^{k,i}_{lr}g^{rj}-2P^j_{lr}C^{irk}-2P^k_{lr}C^{irj}
-P^i_{rl}C^{jkr}-P^h_{rl}C^{jk}_hg^{ri})g^{sl}
\nonumber\\
\!\!\!\!&&\!\!\!\!-P^j_{lr}C^{lis}g^{rk}
-P^k_{lr}C^{lis}g^{rj}\Big]\delta_s
+\Big[c\beta p_s(G^{jk,i}+c\beta G^{jk}G^{ih}p_h-\beta C^{jki})\nonumber\\
\!\!\!\!&&\!\!\!\!
-\frac{\beta^2}{4}g^{lh}(P^j_{lr}P^i_{hs}g^{rk}+P^j_{lr}P^m_{hn}g^{rk}g^{ni}g_{ms}+P^k_{lr}P^i_{hs}g^{rj}+P^k_{lr}P^m_{hn}g^{rj}g^{ni}g_{ms})
\nonumber\\
\!\!\!\!&&\!\!\!\!+c\beta
G^{jk}\delta^i_s+C^{jk}_hC^{ih}_s-C^{jk,i}_s\Big]{\dot{\partial}}^s,\label{Curvature4}
\end{eqnarray}
where $P^{j,i}_{lr}={\dot{\partial}}^iP^j_{lr}$. Since
${\dot{\partial}}^i\tau=p^i$ and ${\dot{\partial}}^ip^j=g^{ij}$,
then we obtain
\begin{eqnarray}
c\beta p_s\{G^{jk,i}\!\!\!\!&&\!\!\!\!-G^{ik,j}+c\beta G^{jk}G^{ih}p_h-c\beta G^{ik}G^{jh}p_h\}=\nonumber\\
\!\!\!\!&&\!\!\!\!\frac{c^2\beta^4}{1-2c\beta^2\tau}(g^{ik}p^j-g^{jk}p^i+g^{jk}p^i-g^{ik}p^j)=0.\label{Curvature5}
\end{eqnarray}
With replacing  $i,j$ in (\ref{Curvature4}), setting this equations
in (\ref{Curvature2}) and attention (\ref{Curvature5}), we can obtain the following
\begin{eqnarray}
K({\dot{\partial}}^i,{\dot{\partial}}^j){\dot{\partial}}^k\!\!\!\!&=&\!\!\!\!
\frac{\beta^2}{2}\Big[(P^{j,i}_{lr}g^{rk}+P^{k,i}_{lr}g^{rj}-P^{i,j}_{lr}g^{rk}
-P^{k,j}_{lr}g^{ri}-P^j_{lr}C^{irk}+P^h_{rl}C^{ik}_hg^{rj}-P^h_{rl}C^{jk}_hg^{ri}
\nonumber\\
\!\!\!\!&&\!\!\!\!+P^i_{rl}C^{jkr})g^{sl}
+g^{rk}(P^i_{lr}C^{ljs}-P^j_{lr}C^{lis})-P^k_{lr}C^{lis}g^{rj}
+P^k_{lr}C^{ljs}g^{ri}\Big]\delta_s\nonumber\\
\!\!\!\!&&\!\!\!\!
+\Big[\frac{\beta^2}{4}g^{lh}\big(P^i_{lr}P^j_{hs}g^{rk}-P^j_{lr}P^i_{hs}g^{rk}
+P^i_{lr}P^m_{hn}g^{rk}g^{nj}g_{ms}-P^j_{lr}P^m_{hn}g^{rk}g^{ni}g_{ms}\nonumber\\
\!\!\!\!&&\!\!\!\!+P^k_{lr}P^j_{hs}g^{ri}-P^k_{lr}P^i_{hs}g^{rj}
+P^k_{lr}P^m_{hn}g^{ri}g^{nj}g_{ms}-P^k_{lr}P^m_{hn}g^{rj}g^{ni}g_{ms}\big)\nonumber\\
\!\!\!\!&&\!\!\!\!+C^{ik,j}_s-C^{jk,i}_s+c\beta G^{jk}\delta^i_s-c\beta G^{ik}\delta^j_s+C^{jk}_hC^{ih}_s
-C^{ik}_hC^{jh}_s\Big]{\dot{\partial}}^s.\label{Curvature6}
\end{eqnarray}
By  (\ref{Curvature1}) it follows that
\begin{equation}
K({\dot{\partial}}^i,\delta_j){\dot{\partial}}^k={\nabla
}_{{\dot{\partial}}^i} {\nabla
}_{\delta_j}{\dot{\partial}}^k-{\nabla }_{\delta_j}{\nabla
}_{{\dot{\partial}}^i}{\dot{\partial}}^k-\nabla_{[{\dot{\partial}}^i,\delta_j]}{\dot{\partial}}^k.\label{Curvature7}
\end{equation}
From (\ref{con2}), we have
\begin{eqnarray}
{\nabla }_{{\dot{\partial}}^i}{\nabla
}_{\delta_j}{\dot{\partial}}^k=(C^{kh,i}_j\!\!\!\!&-&\!\!\!\!c\beta
G^{kh,i}p_j-c\beta G^{kh}\delta^i_j)\delta_h+(C^{kh}_j-c\beta
G^{kh}p_j)\nabla_{{\dot{\partial}}^i}\delta_h\nonumber\\
\!\!\!\!&&\!\!\!\!+\frac{1}{2}\{P^{k,i}_{jh}-(P^l_{jr}g^{rk}g_{lh})^{,i}-2H^{k,i}_{jh}\}{\dot{\partial}}^h
\nonumber\\
\!\!\!\!&&\!\!\!\!+\frac{1}{2}\{P^k_{jh}-P^l_{jr}g^{rk}g_{lh}-2H^k_{jh}\}
\nabla_{{\dot{\partial}}^i}{\dot{\partial}}^h.\label{Curvature8}
\end{eqnarray}
Putting (\ref{con3}) and (\ref{con4}) in the above relation yields
\begin{eqnarray}
{\nabla }_{{\dot{\partial}}^i}{\nabla
}_{\delta_j}{\dot{\partial}}^k\!\!\!\!&=&\!\!\!\!\Big[C^{ks,i}_j-c\beta
G^{ks,i}p_j-c\beta G^{ks}\delta^i_j+C^{kh}_jC^{is}_h-c\beta
C^{is}_jG^{kh}p_j+c^2\beta ^2G^{kh}G^{is}p_jp_s\nonumber\\
\!\!\!\!&&\!\!\!\!-\frac{\beta
^2}{4}g_{ts}\big(P^l_{jr}P^i_{mt}g^{rk}g_{lh}g^{mh}+P^l_{jr}P^h_{mt}g^{rk}g_{lh}g^{mi}+2P^i_{mt}H^k_{jh}g^{mh}
+2P^h_{mt}H^k_{jh}g^{mi}\big)\nonumber\\
\!\!\!\!&&\!\!\!\!+\frac{\beta
^2}{4}g^{ls}(P^k_{jh}P^i_{rl}g^{rh}+P^k_{jh}P^h_{rl}g^{ri})\Big]\delta_s+
\Big[-\frac{1}{2}g_{ls}(C^{kh}_jP^l_{hr}g^{ri}+\frac{1}{2}P^{l,i}_{jr}g^{rk})\nonumber\\
\!\!\!\!&&\!\!\!\!+\frac{1}{2}c\beta^2g^{kh}p_j(P^i_{hs}+P^l_{hr}g^{ri}g_{ls})
+\frac{1}{2}c\beta^2p_s(P^k_{jh}g^{ih}-P^i_{jr}g^{rk}-2H^k_{jh}G^{ih}) \nonumber\\\!\!\!\!&&\!\!\!\!
+P^l_{jr}(C^{ikr}g_{ls}-2g^{rk}C^i_{ls})
-\frac{1}{2}(P^k_{jh}C^{ih}_s+C^{kh}_jP^i_{hs}-P^{k,i}_{js})\nonumber\\
\!\!\!\!&&\!\!\!\!  -H^{k,i}_{js}+H^k_{jh}C^{ih}_s\Big]{\dot{\partial}}^s.\label{Curvature9}
\end{eqnarray}
Similarly, we obtain
\begin{eqnarray}
{\nabla }_{\delta_j}{\nabla
}_{{\dot{\partial}}^i}{\dot{\partial}}^k\!\!\!\!&=&\!\!\!\!\Big[\frac{\beta
^2}{2}\delta _j(P^i_{rl}g^{lk}g^{sr})+\frac{\beta ^2}{2}\delta
_j(P^k_{rl}g^{li}g^{sr})+\frac{\beta
^2}{2}P^i_{rl}H^s_{jh}g^{lk}g^{hr}\nonumber\\\!\!\!\!&&\!\!\!\!
+\frac{\beta
^2}{2}P^k_{rl}H^s_{jh}g^{li}g^{hr}-C^{ik}_hC^{hs}_j+c\beta
^2C^{ik}_hg^{hs}p_j-c^2\beta ^2G^{ik}G^{hs}p_hp_j\Big]\delta
_s\nonumber\\\!\!\!\!&&\!\!\!\! +\Big[\frac{\beta
^2}{2}(P^i_{rl}g^{lk}g^{hr}+P^k_{rl}g^{li}g^{hr})(-\frac{1}{\beta^2}C_{jhs}+c\beta
G_{sh}p_j)+\delta_j(-C^{ik}_s+c\beta
G^{ik}p_s)\nonumber\\\!\!\!\!&&\!\!\!\!+(-C^{ik}_h+c\beta
G^{ik}p_h)(\frac{1}{2}P^h_{js}-\frac{1}{2}P^l_{jr}g^{rh}g_{ls}-H^h_{js})\Big]{\dot{\partial}}^s.\label{Curvature10}
\end{eqnarray}
Using (\ref{con3}), we get
\begin{eqnarray}
\nabla_{[{\dot{\partial}}^i,\delta_j]}{\dot{\partial}}^k\!\!\!\!&=&\!\!\!\!\frac{\beta^2}{2}[P^h_{rl}H^i_{jh}g^{kl}+P^k_{rl}H^i_{jh}g^{hl}
-P^h_{rl}P^i_{jh}g^{kl}-P^k_{rl}P^i_{jh}g^{hl}]g^{sr}\delta_s\nonumber\\
\!\!\!\!&&\!\!\!\!+[-H^i_{jh}C^{hk}_s+c\beta
H^i_{jh}G^{hk}p_s+P^i_{jh}C^{hk}_s-c\beta^2
P^i_{jh}g^{hk}p_s]{\dot{\partial}}^s.\label{Curvature11}
\end{eqnarray}
With a direct computation it follows that
\begin{eqnarray}
&&c^2\beta^2G^{kh}G^{is}p_jp_h+c^2\beta^2G^{ik}G^{hs}p_jp_h-c\beta
G^{ks,i}p_j-c\beta G^{kh}C^{is}_hp_j-c\beta
G^{hs}C^{ik}_hp_j\nonumber\\
&&=\frac{c^2\beta^4}{1-2c\beta^2\tau}g^{is}p^kp_j+\frac{c^2\beta^4}{1-2c\beta^2\tau}g^{ik}p^sp_j
+\frac{2c^3\beta^6}{(1-2c\beta^2\tau)^2}p^ip^kp^sp_j+2c\beta^2C^{iks}p_j\nonumber\\
&&\ \ \
-\frac{2c^3\beta^6}{(1-2c\beta^2\tau)^2}p^ip^kp^sp_j-\frac{c^2\beta^4}{1-2c\beta^2\tau}g^{ik}p^sp_j-\frac{c^2\beta^4}{1-2c\beta^2\tau}g^{is}p^kp_j-2c\beta^2
C^{kis}p_j\nonumber\\
&&=0.\label{Curvature12}
\end{eqnarray}
By using (\ref{Curvature7})-(\ref{Curvature12}) and attention to
relations $G^{ik}_{|j}=0, p_{s|j}=0$, one can obtains the following
\begin{eqnarray}
K({\dot{\partial}}^i,\delta_j){\dot{\partial}}^k\!\!\!\!&=&\!\!\!\!\Big[-c\beta
G^{ks}\delta^i_j+C^{ks,i}_j+C^{kh}_jC^{is}_h+C^{ik}_hC^{hs}_j+\frac{\beta^2}{4}P^k_{jh}P^i_{rl}g^{rh}g^{ls}\nonumber\\
\!\!\!\!&&\!\!\!\!+\frac{\beta^2}{4}(P^k_{jh}P^h_{rl}g^{ri}g^{ls}
-P^l_{jr}P^i_{lt}g^{rk}g^{ts}-2P^i_{mt}F^k_{jh}g^{mh}g^{ts}-2P^h_{mt}F^k_{jh}g^{mi}g^{ts})\nonumber\\
\!\!\!\!&&\!\!\!\!-\frac{\beta^2}{2}\big(\delta_j(P^i_{rl}g^{lk}g^{sr})
+\delta_j(P^k_{rl}g^{li}g^{sr})+P^i_{rl}F^s_{jh}g^{lk}g^{hr}+P^k_{rl}F^s_{jh}g^{li}g^{hr}\big)\nonumber\\
\!\!\!\!&&\!\!\!\!-\frac{\beta^2}{2}(P^h_{rl}F^i_{jh}g^{kl}g^{sr}+P^k_{rl}F^i_{jh}g^{hl}g^{sr}
-P^i_{jh}P^h_{rl}g^{kl}g^{sr}-P^i_{jh}P^k_{rl}g^{hl}g^{sr})\nonumber\\
\!\!\!\!&&\!\!\!\!-\frac{\beta^2}{4}P^l_{jr}P^h_{mt}g^{rk}g_{lh}g^{mi}g^{ts}\Big]
\delta_s+\Big[C^{ik}_{s|j}-\frac{1}{2}C^{kh}_jP^i_{hs}-\frac{1}{2}C^{kh}_jP^l_{hr}g^{ri}g_{ls}\nonumber\\
\!\!\!\!&&\!\!\!\!  +\frac{1}{2}c\beta^2P^l_{hr}g^{kh}g^{ri}g_{ls}p_j+\frac{1}{2}P^{k,i}_{js}-\frac{1}{2}P^{l,i}_{jr}g^{rk}g_{ls}
+\frac{1}{2}P^l_{jr}C^{irk}g_{ls}
-P^l_{jr}C^i_{ls}g^{rk}\nonumber\\
\!\!\!\!&&\!\!\!\!-H^{k,i}_{js}-P^i_{jh}C^{kh}_s+c\beta^2P^i_{jh}g^{hk}p_s+
\frac{1}{2}c\beta^2(P^k_{jh}g^{ih}p_s-P^i_{jr}g^{rk}p_s-P^k_{sl}g^{li}p_j)
\nonumber\\\!\!\!\!&&\!\!\!\!
-\frac{1}{2}(P^k_{jh}C^{ih}_s-P^l_{jr}C^i_{ls}g^{rk}-P^i_{rl}C^r_{js}g^{lk}-P^k_{rl}C^r_{js}g^{li}
-P^h_{js}C^{ik}_h)\Big]{\dot{\partial}}^s.\label{curvature13}
\end{eqnarray}
From (\ref{con2}) and (\ref{con9}), we have
\begin{eqnarray}
K(\delta_i,\delta_j){\delta_k}
\!\!\!\!&=&\!\!\!\!\Big[R^s_{kji}+\frac{1}{\beta^2}(C_{ikh}C^{hs}_j-C_{jkh}C^{hs}_i)
+c^2\beta^2(p_i\delta^s_j-p_j\delta^s_i)p_k\Big]\delta_s\nonumber\\\!\!\!\!&&\!\!\!\!
+\Big[\frac{1}{\beta^2}(C_{ikh}P^h_{jr}-C_{jkh}P^h_{ir})+\frac{1}{\beta^2}(C_{ikr|j}-C_{jkr|i})+cg_{hk}(P^h_{ir}p_j-P^h_{jr}p_i)\nonumber\\\!\!\!\!&&\!\!\!\!
+c^2\beta^2p_hp_k(P^h_{jr}p_i-P^h_{ir}p_j)\Big]{\dot{\partial}}^s,\label{curvature14}
\end{eqnarray}
and
\begin{eqnarray}
K(\delta_i,\delta_j){\dot{\partial}}^k\!\!\!\!&=&\!\!\!\!
\Big[\frac{1}{2}(P^k_{jh}C^{hs}_i-P^k_{ih}C^{hs}_j)
+\frac{\beta^2}{2}R_{hij}(P^h_{mr}g^{ms}g^{rk}+P^k_{mr}g^{ms}g^{rh})\nonumber\\
\!\!\!\!&&\!\!\!\!
+\frac{c\beta^2}{2}(P^k_{ih}g^{hs}p_j-P^k_{jh}g^{hs}p_i-P^s_{ir}g^{rk}p_j+P^s_{jr}g^{rk}p_i)\nonumber\\
\!\!\!\!&&\!\!\!\!
+\frac{1}{2}g^{rk}(P^m_{ir}C^{s}_{jm}-\frac{1}{2}P^m_{jr}C^{s}_{im})+C^{ks}_{j|i}-C^{ks}_{i|j}\Big]\delta_s\nonumber\\
\!\!\!\!&&\!\!\!\!
+\Big[\frac{1}{2}\Big(\delta_j(P^m_{ir}g^{rk}g_{ms})-\delta_i(P^m_{jr}g^{rk}g_{ms})\Big)
+\frac{1}{\beta^2}(C^{kh}_iC_{jhs}-C^{kh}_jC_{ihs})\nonumber\\
\!\!\!\!&&\!\!\!\!+R^k_{sij}+c\beta R_{hij}G^{hk}p_s-
\frac{1}{2}(P^k_{js|i}-P^k_{is|j})
+\frac{1}{4}(P^k_{jh}P^h_{is}-P^k_{ih}P^h_{js})\nonumber\\
\!\!\!\!&&\!\!\!\!
+\frac{1}{4}g_{ws}\ \big(P^k_{ih}P^w_{ju}g^{uh}-P^k_{jh}P^w_{iu}g^{uh}
+P^m_{jr}P^w_{im}g^{rk}-P^m_{ir}P^w_{jm}g^{rk}\big)\nonumber\\
\!\!\!\!&&\!\!\!\!
+\frac{1}{2}g^{rk}g_{mh}(P^m_{jr}H^h_{is}-P^m_{ir}H^h_{js})+\frac{1}{2}g^{uh}g_{ws}(P^w_{iu}H^k_{jh}-P^w_{ju}H^k_{ih})\nonumber\\
\!\!\!\!&&\!\!\!\!
+\frac{1}{4}(P^m_{ir}P^h_{js}g^{rk}g_{mh}-P^m_{jr}P^h_{is}g^{rk}g_{mh})\Big]{\dot{\partial}}^s.\label{curvature15}
\end{eqnarray}
Similarly, from (\ref{con3}) and (\ref{con4}), we get
\begin{eqnarray}
K({\dot{\partial}}^i,{\dot{\partial}}^j)\delta_k\!\!\!\!&=&\!\!\!\!\Big[-
\frac{\beta^2}{4}g^{us}(P^j_{kh}P^i_{uv}g^{vh}-P^i_{kh}P^j_{uv}g^{vh}+P^j_{kh}P^h_{uv}g^{vi}-P^i_{kh}P^h_{uv}g^{vj}\nonumber\\
\!\!\!\!&&\!\!\!\!+P^m_{kr}P^h_{uv}g^{vi}g^{jr}g_{mh}-P^m_{kr}P^h_{uv}g^{vj}g^{ir}g_{mh}
+P^m_{kr}P^i_{um}g^{jr}-P^m_{kr}P^j_{um}g^{ir})\nonumber\\
\!\!\!\!&&\!\!\!\!+C^{js,i}_k-C^{is,j}_k+C^{jh}_kC^{is}_h-C^{ih}_kC^{js}_h+c\beta
G^{is}\delta^j_k-c\beta
G^{js}\delta^i_k\Big]\delta_s\nonumber\\
\!\!\!\!&&\!\!\!\!
+\frac{1}{2}\Big[c\beta^2p_k(P^i_{hs}g^{jh}-P^j_{hs}g^{ih}+P^m_{hr}g^{jh}g^{ir}g_{ms}-P^m_{hr}g^{ih}g^{jr}g_{ms})
\nonumber\\
\!\!\!\!&&\!\!\!\!
+(P^m_{kr}g^{ir}g_{ms})^{,j}-(P^m_{kr}g^{jr}g_{ms})^{,i}+P^{i,j}_{ks}-P^{j,i}_{ks}+C^{ih}_kP^j_{hs}
\nonumber\\
\!\!\!\!&&\!\!\!\!
-C^{jh}_kP^i_{hs}+C^{ih}_kP^m_{hr}g^{jr}g_{ms}-C^{jh}_kP^m_{hr}g^{ir}g_{ms}+C^{ih}_sP^j_{kh}
\nonumber\\
\!\!\!\!&&\!\!\!\!-C^{jh}_sP^i_{kh}+C^i_{ms}P^m_{kr}g^{jr}-C^j_{ms}P^m_{kr}g^{ir}\Big]{\dot{\partial}}^s.\label{curvature16}
\end{eqnarray}
Finally, from (\ref{con3}), (\ref{con2}), (\ref{con4}) and
(\ref{con9}) we have
\begin{eqnarray}
K(\delta_i,{\dot{\partial}}^j)\delta_k\!\!\!\!&=&\!\!\!\!
\Big[\frac{1}{2}c\beta^2p_i(P^j_{kh}g^{hs}+P^s_{kr}g^{rj})-\frac{1}{2}(P^j_{kh}C^{hs}_i
+P^l_{kr}C^s_{li}g^{rj})-H^{s,j}_{ik}+C^{js}_{k|i}\nonumber\\
\!\!\!\!&&\!\!\!\!
+\frac{1}{2}g^{sl}(P^j_{lr}C^r_{ik}+P^h_{lr}C_{ikh}g^{rj}-c\beta^3P^j_{lr}g^{rh}G_{hk}p_i)
+c\beta^2P^j_{ih}g^{hs}p_k\nonumber\\
\!\!\!\!&&\!\!\!\!-\frac{1}{2}c\beta^2P^h_{lr}g^{rj}g^{sl}g_{hk}p_i-P^j_{ih}C^{hs}_k\Big]\delta_s
+\Big[cC^j_{is}p_k-\frac{1}{2}p^j_{ks|i}-\frac{1}{\beta^2}C^{jh}_kC_{ihs}
\nonumber\\
\!\!\!\!&&\!\!\!\!-\frac{1}{2}\delta_i(P^l_{kr}g^{rj}g_{ls})-\frac{1}{4}P^j_{kh}P^h_{is}+\frac{1}{4}P^j_{kh}P^l_{ir}g^{rh}g_{ls}
-\frac{1}{4}P^l_{kr}P^h_{is}g^{rj}g_{lh}+\frac{1}{\beta^2}C^{,j}_{iks}\nonumber\\
\!\!\!\!&&\!\!\!\!+\frac{1}{4}P^l_{kr}P^m_{il}g^{rj}g_{ms}+\frac{1}{2}P^l_{kr}H^h_{is}g^{rj}g_{lh}+\frac{1}{2}P^l_{hr}H^h_{ik}g^{rj}g_{ls}-c\beta
G_{sk}\delta^j_i-\frac{c}{\beta}C^j_{ik}p_s\nonumber\\
\!\!\!\!&&\!\!\!\!-\frac{1}{\beta^2}C_{ikh}C^{jh}_s-\frac{1}{2}P^l_{kr}H^j_{ih}g^{rh}g_{ls}+\frac{1}{2}P^j_{ih}P^h_{ks}
+\frac{1}{2}P^j_{ih}P^l_{kr}g^{rh}g_{ls}\Big]{\dot{\partial}}^s.\label{curvature17}
\end{eqnarray}
\begin{theorem}\label{THM2}
Let $(M,K)$ be a Cartan space of constant curvature $c$ and the
components of the metric $G$ are given by (\ref{Kahler}).
Then the following are hold if and only if $(M,K)$ is reduce to a Riemannian space.\\
$(i)$ for $c<0$, $(T^{\ast}M_0,G,J)$ is a K\"{a}hler Einstein manifold.\\
$(ii)$ for $c>0$, $(T_{\beta}^{\ast}M_0,G,J)$ is a K\"{a}hler
Einstein manifold, where $T_{\beta}^{\ast}M_0$ is the tube around the
zero section in $TM$ defined by $2\tau<\frac{1}{c\beta^2}$.
\end{theorem}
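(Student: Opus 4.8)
The plan is to turn the Kähler--Einstein requirement into a condition on the Cartan tensor alone, and then show that condition is equivalent to $C^{ijk}=0$, i.e.\ to $(M,K)$ being Riemannian. By Theorem \ref{THM1}, under the present hypotheses (constant curvature $c$, $v$ as in (\ref{constant}), so that $G$ has the form (\ref{Kahler})) the structure $(T^{\ast}M_0,G,J)$ is already Kähler---globally when $c<0$ and on the tube $T_{\beta}^{\ast}M_0$ when $c>0$. Hence only the Einstein equation $\ric=\lambda G$ (with $\lambda$ constant) must be adjoined. In the adapted frame $(\delta_i,\dot\partial^i)$ this splits into
\[
\ric(\delta_i,\delta_j)=\lambda G_{ij},\qquad \ric(\dot\partial^i,\dot\partial^j)=\lambda G^{ij},\qquad \ric(\delta_i,\dot\partial^j)=0.
\]
Since $G$ is Kähler, $\ric$ is $J$-invariant, and because $J(\delta_i)=G_{ik}\dot\partial^k$ one has $\ric(\delta_i,\delta_j)=G_{ik}G_{jl}\ric(\dot\partial^k,\dot\partial^l)$, so the first two blocks are equivalent; it therefore suffices to analyse $\ric(\delta_i,\delta_j)$ together with the mixed block.

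Next I would compute the Ricci tensor by tracing the curvature in the adapted frame. Using the dual coframe $(dx^i,\delta p_i)$,
\[
\ric(Y,Z)=\langle dx^a,K(\delta_a,Y)Z\rangle+\langle \delta p_a,K(\dot\partial^a,Y)Z\rangle,
\]
and substituting the curvature components (\ref{Curvature6}) and (\ref{curvature13})--(\ref{curvature17}) produced above, I would carry out the contractions, using the homogeneity identities $C^{ijk}p_k=0$, $P^i_{jk}p^j=0$ and $p_hP^h_{rl}=0$ repeatedly to dispose of the mixed block. Each diagonal block then takes the shape ``main part proportional to $G$'' (carrying the factor $c$ coming from the constant-curvature terms, e.g.\ the $c\beta G^{jk}\delta^i_s$ piece of (\ref{Curvature6})) plus ``correction terms'' that are polynomial in $C^{ijk}$, in $P^i_{jk}$ and in their $h$- and $v$-covariant derivatives---the genuinely non-Riemannian pieces being combinations such as $C^{jk}_hC^{ih}_s-C^{ik}_hC^{jh}_s$ and $C^{ik,j}_s-C^{jk,i}_s$ appearing in (\ref{Curvature6}).

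I would then impose the Einstein equation and isolate the obstruction. In the Riemannian case $C^{ijk}=0$, whence $P^i_{jk}=H^i_{jk}-\dot\partial^iN_{jk}=0$ (the canonical metrical connection reduces to the Levi-Civita connection of the base); every correction term then vanishes identically and the surviving main part equals $\lambda G$ with $\lambda$ a constant built from $c,\beta$ and $n$. This settles the ``if'' direction of both $(i)$ and $(ii)$. For the converse, subtracting the main part shows that $\ric=\lambda G$ forces the entire collection of correction terms to vanish; tracing these and again invoking the homogeneity identities collapses the system to the vanishing of a definite contraction of the Cartan tensor (schematically $g_{pq}C^{pab}C^{q}{}_{ab}=0$), which holds only when $C^{ijk}=0$. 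Thus the Einstein condition is equivalent to $(M,K)$ being Riemannian, giving the non-existence of a non-Riemannian Cartan structure with $(T^{\ast}M_0,G,J)$ Einstein.

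The hard part will be the bookkeeping in tracing the long expressions (\ref{curvature15}) and (\ref{curvature17}) and, after all cancellations forced by $C^{ijk}p_k=0$ and $P^i_{jk}p^j=0$, identifying the residual obstruction as a sign-definite quadratic form in $C$. Establishing that this residue cannot be absorbed into any multiple of $G$---so that ``Einstein'' genuinely implies $C^{ijk}=0$ rather than merely constraining $C$---is the crux of the argument and the step I expect to be most delicate.
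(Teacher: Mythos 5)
Your forward direction matches the paper: when $C^{ijk}=0$ one also has $P^i_{jk}=0$, the curvature components collapse to the constant-curvature form, and tracing gives $Ric=cn\beta\,G$. The gap is in the converse. You assert that $Ric=\lambda G$ ``forces the entire collection of correction terms to vanish'' and that tracing them ``collapses the system to the vanishing of a definite contraction of the Cartan tensor (schematically $g_{pq}C^{pab}C^{q}{}_{ab}=0$)''. Neither claim is justified: the Einstein equation only forces the correction terms to sum to a multiple of $G$, not to vanish individually, and the traced correction terms are not a sign-definite quadratic form in $C$ --- they contain first and second vertical derivatives of $C$ (for instance the traced $C^{ik,j}_s-C^{jk,i}_s$ from (\ref{Curvature6})) as well as horizontal covariant derivatives, with no definite sign. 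You yourself flag this identification as the most delicate step; it is in fact the whole theorem, and the mechanism you propose for it does not work as stated.

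What the paper actually does is different and more specific: it contracts the vertical--vertical Einstein equation with $p_k$. The homogeneity identities $C^{ijk}p_k=0$, $P^i_{jk}p^j=p_hP^h_{jk}=0$ and $\delta_sp_k=H^h_{sk}p_h$ then annihilate every quadratic and derivative term except $p_kC^{jk,s}_s=-I^j$, leaving the \emph{linear} condition $I^j=(\lambda-cn\beta)\frac{\beta}{1-2c\beta^2\tau}\,p^j$ on the mean Cartan tensor $I^j=C^{js}_s$. Contracting once more with $p_j$ and using $p_jI^j=0$ pins $\lambda=cn\beta$ (and rules out $\lambda=0$ by contradiction), whence $I^j=0$. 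The final passage $I^j=0\Rightarrow C^{ijk}=0$ is not a positivity argument on a norm of $C$ but the Cartan-space analogue of Deicke's theorem, which the paper invokes implicitly. So your outline identifies the right objects, but the decisive contraction with $p_k$, the linear rather than quadratic nature of the resulting obstruction, and the appeal to the Deicke-type theorem are all missing from your argument.
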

\begin{proof}
Let $(M,K)$ be a Cartan space. Then $C^{hi}_k$ and $P^h_{ik}$
are vanish and $H^i_{jk}$ is a function of $(x^h)$. Therefore
(\ref{curvature14}) reduce to
\begin{equation}
K(\delta_i,\delta_j)\delta_k=[R^s_{kji}+c^2\beta^2(p_i\delta^s_j-p_j\delta^s_i)p_k]\delta_s.\label{Ricc1}
\end{equation}
From Proposition 10.2 in chapter 4 of \cite{MirHri}, we have
$R_{kji}=-p_hR^h_{kji}$. Then we get
\begin{equation}
p_hR^h_{kji}=c(g_{kj}\delta^h_i-g_{ki}\delta^h_j)p_h.\label{Ricc2}
\end{equation}
Differentiating  (\ref{Ricc2}) with respect to $p_s$, taking
$p=0$, yields
\[
R^s_{kji}=c(g_{kj}\delta^s_i-g_{ki}\delta^s_j).
\]
By putting above equation in (\ref{Ricc1}), it follows that
\begin{eqnarray}
K(\delta_i,\delta_j)\delta_k\!\!\!\!&=&\!\!\!\!c\beta[(\frac{1}{\beta}g_{kj}-c\beta
p_kp_j)\delta^s_i-(\frac{1}{\beta}g_{ki}-c\beta\
p_kp_i)\delta^s_j]\delta_s\nonumber\\
\!\!\!\!&=&\!\!\!\!c\beta(G_{kj}\delta^s_i-G_{ki}\delta^s_j)\delta_s.\label{Ricc3}
\end{eqnarray}
Also from (\ref{curvature17}), we obtain
\begin{equation}
K({\dot{\partial}}^i,\delta_j)\delta_k=c\beta
G_{sk}\delta^i_j{\dot{\partial}}^s.\label{Ricc4}
\end{equation}
From (\ref{Ricc3}) and (\ref{Ricc4}), we have
\begin{eqnarray}
Ric(\delta_j,\delta_k)\!\!\!\!&=&\!\!\!\!G^{hi}G(K(\delta_i,\delta_j)\delta_k,\delta_h)+G_{hi}G(K({\dot{\partial}}^i,\delta_j)\delta_k,{\dot{\partial}}^h),\nonumber\\
\!\!\!\!&=&\!\!\!\!c\beta(G_{kj}\delta^s_i-G_{ki}\delta^s_j)G^{hi}G_{sh}+c\beta
G_{sk}\delta^i_jG_{hi}G^{sh}\nonumber\\
\!\!\!\!&=&\!\!\!\!cn\beta G_{jk}\nonumber\\
\!\!\!\!&=&\!\!\!\!cn\beta
G(\delta_j,\delta_k).\label{Ricc5}
\end{eqnarray}
Similarly from (\ref{Curvature6}) and (\ref{curvature13}),
respectively, we have
\begin{eqnarray}
K({\dot{\partial}}^i,{\dot{\partial}}^j){\dot{\partial}}^k=c\beta(G^{jk}\delta^i_s-G^{ik}\delta^j_s){\dot{\partial}}^s,\label{Ricc6}
\end{eqnarray}
and
\begin{eqnarray}
K(\delta_i,{\dot{\partial}}^j){\dot{\partial}}^k=c\beta
G^{ks}\delta^j_i\delta_s.\label{Ricc7}
\end{eqnarray}
Using (\ref{Ricc6}) and (\ref{Ricc7}), we conclude that
\begin{eqnarray}
Ric({\dot{\partial}}^j,{\dot{\partial}}^k)\!\!\!\!&=&\!\!\!\!G^{ih}G(K(\delta_i,{\dot{\partial}}^j){\dot{\partial}}^k,\delta_h)
+G_{ih}G(K({\dot{\partial}}^i,{\dot{\partial}}^j){\dot{\partial}}^k,{\dot{\partial}}^h)\nonumber\\
\!\!\!\!&=&\!\!\!\!c\beta G^{ks}\delta^j_iG^{ih}G_{hs}+c\beta(G^{jk}\delta^i_s-G^{ik}\delta^j_s)G_{hi}G^{hs}\nonumber\\
\nonumber\!\!\!\!&=&\!\!\!\!cn\beta G^{jk}\\
\!\!\!\!&=&\!\!\!\!cn\beta
G({\dot{\partial}}^j,{\dot{\partial}}^k).\label{Ricc8}
\end{eqnarray}
By (\ref{curvature13}) and (\ref{curvature15}),respectively,  we have
\begin{equation}
K(\delta_i,\delta_j){\dot{\partial}}^k=(R^k_{sij}+c\beta
R_{hij}G^{hk}p_s){\dot{\partial}}^s,\label{Ricc9}
\end{equation}
and
\begin{equation}
K({\dot{\partial}}^i,\delta_j){\dot{\partial}}^k=-c\beta
G^{ks}\delta^i_j\delta_s.\label{Ricc10}
\end{equation}
Using (\ref{Ricc9}) and (\ref{Ricc10}), we obtain
\begin{equation}
Ric(\delta_j,{\dot{\partial}}^k)=G^{ih}G(K(\delta_i,\delta_j){\dot{\partial}}^k,\delta_h)
+G_{ih}G(K({\dot{\partial}}^i,\delta_j){\dot{\partial}}^k,{\dot{\partial}}^h)=0.\label{Ricc11}
\end{equation}
From (\ref{curvature16}), we have
\begin{equation}
K({\dot{\partial}}^i,{\dot{\partial}}^j)\delta_k=c\beta(G^{is}\delta^j_k-G^{js}\delta^i_k)\delta_s.\label{Ricc12}
\end{equation}
By attention to (\ref{Ricc4}) and (\ref{Ricc12}), one can obtains
\begin{equation}
Ric({\dot{\partial}}^j,\delta_k)=G^{ih}G(K(\delta_i,{\dot{\partial}}^j)\delta_k,\delta_h)
+G_{ih}G(K({\dot{\partial}}^i,{\dot{\partial}}^j)\delta_k,{\dot{\partial}}^h)=0.\label{Ricc13}
\end{equation}
From (\ref{Ricc5}), (\ref{Ricc8}), (\ref{Ricc11}) and
(\ref{Ricc13}), we deduce that
\[
Ric(X,Y)=cn\beta G(X,Y),\ \ \ \forall X, Y\in\chi(T^{\ast}M).
\]
This means that $(T^{\ast}M,G)$ is a Einstein manifold.

Conversely, suppose that the conditions  $(i), (ii)$ are hold. Then there exists a real constant $\lambda$ such that $Ric(X,Y)=\lambda G(X,Y)$. We consider the following cases:\\\\
\textbf{Case (1)}. If $\lambda=0$ (i.e., $(T^{\ast}M,G)$ is Ricci
flat), then we have
$Ric({\dot{\partial}}^j,{\dot{\partial}}^k)=0$. By using
(\ref{Curvature6}),(\ref{curvature13}) and definition of Ricci
tensor, we obtain
\begin{eqnarray}
0=p_kRic({\dot{\partial}}^j,{\dot{\partial}}^k)=cn\beta
p_kG^{jk}- p_kC^{jk,s}_s\!\!\!\!&+&\!\!\!\!\frac{\beta^2}{2}\delta_s(P^j_{rl}g^{lk}g^{sr})p_k
+\frac{\beta^2}{2}\delta_s(P^k_{rl}g^{lj}g^{sr})p_k\nonumber\\
\!\!\!\!&+&\!\!\!\!
\frac{\beta^2}{2}g^{ts}p_k(P^j_{mt}H^k_{sh}g^{mh}+P^h_{mt}H^k_{sh}g^{mj}).\label{Ricc14}
\end{eqnarray}
Since $P^j_{rl}g^{lk}g^{sr}p_k=P^j_{rl}p^lg^{sr}=0$ and
$\delta_sp_k=H^h_{sk}p_h$, then we have
\begin{eqnarray*}
\delta_s(P^j_{rl}g^{lk}g^{sr})p_k\!\!\!\!&=&\!\!\!\!-(\delta_sp_k)P^j_{rl}g^{lk}g^{sr}\\
\!\!\!\!&=&\!\!\!\!-P^j_{rl}g^{lk}g^{sr}H^h_{sk}p_h.
\end{eqnarray*}
Replacing $h$ and $k$, changing $r$ to $t$, and $l$ to $m$ in the above equation yields
\begin{equation}
\delta_s(P^j_{rl}g^{lk}g^{sr})p_k=-P^j_{tm}g^{mh}g^{st}H^k_{sh}p_k.\label{Ricc15}
\end{equation}
Similarly, we have
\begin{equation}
\delta_s(P^k_{rl}g^{lj}g^{sr})p_k=-P^h_{tm}g^{mj}g^{st}H^k_{sh}p_k.\label{Ricc16}
\end{equation}
With direct computation, we get
\begin{eqnarray}
\nonumber cn\beta p_kG^{jk}\!\!\!\!&=&\!\!\!\!cn\beta p_k(\beta
g^{jk}+\frac{c\beta^3}{1-2c\beta^2\tau}p^jp^k)\\
\!\!\!\!&=&\!\!\!\!\frac{cn\beta^2}{1-2c\beta^2\tau}p^j,\label{Ricc17}
\end{eqnarray}
and
\begin{equation}
p_kC^{jk,s}_s=-p^{,s}_kC^{jk}_s=-\delta^s_kC^{jk}_s=-C^{js}_s=-I^j.\label{Ricc18}
\end{equation}
By using (\ref{Ricc14})-(\ref{Ricc18}), we obtain
\begin{equation}
\frac{cn\beta^2}{1-2c\beta^2\tau}p^j+I^j=0.\label{Ricc19}
\end{equation}
Since $p_jI^j=0$, then by contracting (\ref{Ricc19}) with $p_j$, we have
\begin{equation}
\frac{2cn\beta^2\tau}{1-2c\beta^2\tau}=0.\label{Ricc20}
\end{equation}
From the above equation,  we conclude that $\beta=0$ and this is a contradiction.\\\\
\textbf{Case (2)}. If $\lambda\neq0$, then we have
\[
p_kRic({\dot{\partial}}^j,{\dot{\partial}}^k)=\lambda G^{jk}p_k.
\]
Therefore by using (\ref{Curvature6}), (\ref{curvature13}),
(\ref{Ricc15})-(\ref{Ricc18}), we obtain
\begin{equation}
I^j=(\lambda-cn\beta)\frac{\beta}{1-2c\beta^2\tau}p^j.\label{Ricc21}
\end{equation}
By contracting (\ref{Ricc21}) with $p_j$, we have
\begin{equation}
(\lambda-cn\beta)\frac{2\beta\tau}{1-2c\beta^2\tau}=0,\label{Ricc22}
\end{equation}
i.e., $\lambda=cn\beta$. Then from (\ref{Ricc21}), we result that
$I^j=0$, i.e., $(M,K)$ is reduce to a Riemannian space.
\end{proof}
\begin{cor}
There is not exist any non-Riemannian Cartan structure such that
$(T^{\ast}M_0, G, J)$ became a Einstein manifold.
\end{cor}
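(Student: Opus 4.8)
The plan is to obtain the statement as the contrapositive of the equivalence already established in Theorem \ref{THM2}, after first recording that the constant-curvature hypothesis of that theorem is not an extra assumption but is forced by the very existence of the structure $(T^{\ast}M_0,G,J)$ we are discussing. Indeed, throughout this section $G$ is the metric whose components are given by (\ref{Kahler}), i.e. the metric with $v=-c\alpha\beta^2$, and $J$ is the associated complex structure; by Theorem \ref{3.4} the integrability of $J$ (hence the genuine K\"{a}hler, not merely almost K\"{a}hler, nature of the structure) holds precisely when $(M,K)$ has constant scalar curvature $c$ and $v$ is given by (\ref{constant}). So there is no loss of generality in assuming that $(M,K)$ has constant curvature $c$ with $G$ of the form (\ref{Kahler}), which is exactly the setting of Theorem \ref{THM2}.

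Next I would split according to the sign of $c$ and invoke Theorem \ref{THM2}. For $c<0$ the matrix $(G_{ij})$ is positive definite on all of $T^{\ast}M_0$ and part $(i)$ applies directly; for $c>0$ the positivity condition $1-2c\beta^2\tau>0$ confines $G$ to the tube $T_{\beta}^{\ast}M_0$, which is precisely the domain on which part $(ii)$ is stated. In either case Theorem \ref{THM2} asserts that the Einstein property of the K\"{a}hler manifold is equivalent to the reduction of $(M,K)$ to a Riemannian space, that is, to the vanishing of the Cartan tensor $C^{ijk}$ (equivalently $I^j=C^{js}_s=0$). Taking the contrapositive of the implication ``Einstein $\Rightarrow$ Riemannian'' yields: if $(M,K)$ is non-Riemannian, then $(T^{\ast}M_0,G,J)$ (for $c<0$), respectively the tube (for $c>0$), cannot be Einstein, which is exactly the claim.

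The computational heart of the argument is therefore entirely contained in Theorem \ref{THM2}, and the corollary demands only the bookkeeping above; I do not expect a genuine obstacle, since the hard Ricci computation has already been carried out. The one point that wants care is interpretive rather than computational: one must make explicit that ``Einstein'' in the corollary refers to $(T^{\ast}M_0,G)$ as a Riemannian manifold (the already-established K\"{a}hler property then upgrading this to K\"{a}hler--Einstein), and that for $c>0$ the relevant space is the tube $T_{\beta}^{\ast}M_0$ on which $G$ is actually a metric. If one also wishes to dispose of the flat case $c=0$, the same Ricci computation specializes cleanly, since equation (\ref{Ricc19}) degenerates to $I^j=0$, so that a flat Cartan space carrying an Einstein $(T^{\ast}M_0,G,J)$ is again Riemannian. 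Once these conventions are fixed, the non-existence of a non-Riemannian Cartan structure with Einstein $(T^{\ast}M_0,G,J)$ follows immediately.
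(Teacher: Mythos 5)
Your proposal is correct and follows essentially the same route as the paper, which states the corollary without further argument as the immediate contrapositive of Theorem \ref{THM2}: if $(T^{\ast}M_0,G,J)$ (or the tube, for $c>0$) is Einstein then $(M,K)$ is Riemannian, so no non-Riemannian Cartan structure can produce an Einstein manifold. Your additional remarks --- that integrability of $J$ already forces the constant-curvature setting of Theorem \ref{THM2}, and the disposal of the degenerate case --- are sensible bookkeeping but do not change the substance of the argument.
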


\bigskip

\begin{theorem}\label{THM3}
Let $(M,K)$ be a Cartan space of constant curvature $c$ and the
components of the metric $G$ are given by (\ref{Kahler}).
 Then the following are hold if and only if $(M,K)$ is reduce to a Riemannian space.\\
$(i)$ for $c<0$, $(T^{\ast}M_0,G,J)$ is a locally symmetric K\"{a}hler manifold.\\
$(ii)$ for $c>0$, $(T_{\beta}^{\ast}M_0,G,J)$ is a locally
symmetric K\"{a}hler manifold, where $T_{\beta}^{\ast}M_{\circ}$ is
the tube around the zero section in $T^{\ast}M$, defined by the
condition $2\tau<\frac{1}{c\beta^2}$.
\end{theorem}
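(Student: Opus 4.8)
The statement has exactly the shape of Theorem \ref{THM2}, with ``Einstein'' replaced by ``locally symmetric'', so the plan is to run the two implications in parallel with that proof, using the parallelism condition $\nabla K=0$ (here $K$ is the curvature tensor of $\nabla$ from (\ref{Curvature1})) in place of $\mathrm{Ric}=\lambda G$. Note first that by Theorems \ref{THM1} and \ref{3.4} the hypotheses already make $(T^{\ast}M_0,G,J)$ (resp.\ the tube $T_{\beta}^{\ast}M_0$ for $c>0$) K\"ahler, so ``locally symmetric K\"ahler'' reduces to the single requirement $\nabla K=0$; in particular $\nabla G=0$ and $\nabla J=0$ are available throughout.

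For the forward implication I assume $(M,K)$ is Riemannian, so $C^{ijk}=0$, $P^i_{jk}=0$ and $H^i_{jk}=H^i_{jk}(x)$. Then the connection (\ref{con3}), (\ref{con2}), (\ref{con4}), (\ref{con9}) collapses to $\nabla_{{\dot{\partial}}^i}{\dot{\partial}}^j=c\beta G^{ij}p_h{\dot{\partial}}^h$, $\nabla_{{\dot{\partial}}^i}\delta_j=-c\beta G^{ih}p_j\delta_h$, $\nabla_{\delta_i}{\dot{\partial}}^j=-c\beta G^{jh}p_i\delta_h-H^j_{ih}{\dot{\partial}}^h$ and $\nabla_{\delta_i}\delta_j=H^h_{ij}\delta_h+c\beta G_{hj}p_i{\dot{\partial}}^h$, while the curvature collapses to the seven blocks (\ref{Ricc3}), (\ref{Ricc4}), (\ref{Ricc6}), (\ref{Ricc7}), (\ref{Ricc9}), (\ref{Ricc10}), (\ref{Ricc12}). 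I would then verify $\nabla K=0$ block by block from $(\nabla_W K)(X,Y)Z=\nabla_W(K(X,Y)Z)-K(\nabla_W X,Y)Z-K(X,\nabla_W Y)Z-K(X,Y)\nabla_W Z$. Each block is algebraic in the parallel tensors $G,J$ and in $p$, and the momentum terms generated by differentiating $G^{ij}$ and $p$ cancel precisely because $G^{jk,i}+c\beta G^{jk}G^{ih}p_h$ is totally symmetric in $i,j,k$ --- the same symmetry used in (\ref{Curvature5}). For instance, in $(\nabla_{{\dot{\partial}}^l}K)({\dot{\partial}}^i,{\dot{\partial}}^j){\dot{\partial}}^k$ the Liouville terms $p_h{\dot{\partial}}^h$ cancel by this symmetry and the surviving ${\dot{\partial}}^i,{\dot{\partial}}^j$ coefficients vanish once $\nabla_{{\dot{\partial}}^l}G^{jk}$ is inserted; the remaining combinations of basis arguments go the same way.

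For the converse I assume $\nabla K=0$. Tracing gives $\nabla\,\mathrm{Ric}=0$, so $\mathrm{Ric}$ is parallel and the scalar curvature is constant. The key is to reuse the general (non-Riemannian) contraction already assembled in Theorem \ref{THM2}: the manipulations (\ref{Ricc15})--(\ref{Ricc18}) are pure identities and yield $p_k\,\mathrm{Ric}({\dot{\partial}}^j,{\dot{\partial}}^k)=\frac{cn\beta^2}{1-2c\beta^2\tau}\,p^j+I^j$, where $I^j:=C^{js}_s$ is the trace of the Cartan tensor and vanishes exactly when $(M,K)$ is Riemannian. I would feed $\nabla\,\mathrm{Ric}=0$ into this identity --- equivalently, extract the vertical block of $\nabla_{{\dot{\partial}}^l}K=0$ and trace it --- and then contract with the momenta, using $p_jI^j=0$, $p_i\mid^j=\delta^j_i$ and $\delta_sp_k=H^h_{sk}p_h$. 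Exactly as in (\ref{Ricc19})--(\ref{Ricc22}) this collapses to a scalar relation carrying the factor $\frac{2\tau}{1-2c\beta^2\tau}$; the tube condition $2\tau<\frac{1}{c\beta^2}$ (and, for $c<0$, no condition) keeps $1-2c\beta^2\tau>0$, so the factor is nonzero and the relation forces $I^j=0$, i.e.\ $(M,K)$ is Riemannian. The degenerate alternative is excluded by the same contradiction as in Case (1) of Theorem \ref{THM2}, where one is driven to $\beta=0$.

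The main obstacle is this converse bookkeeping. Since $\nabla K=0$ is a priori weaker than the Einstein equation of Theorem \ref{THM2}, the delicate point is to show that the vertical--vertical--vertical block of $\nabla K$, after tracing and contracting with $p$, genuinely reproduces the identity above --- that is, that the $C$- and $P$-dependent terms in (\ref{Curvature6}) and (\ref{curvature13}), once differentiated by $\nabla_{{\dot{\partial}}^l}$ and contracted, collapse to a nonzero multiple of $I^j$ rather than to a tautology. Everything else (the forward verification and the final contradiction) is routine given the formulas already in place.
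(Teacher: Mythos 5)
Your forward direction coincides with the paper's: assume $(M,K)$ Riemannian, let the connection and curvature blocks collapse, and verify $\nabla K=0$ case by case in the adapted frame. The paper does exactly this (and, like you, only asserts the outcome of the twelve verifications rather than displaying them), so that half is fine.

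The converse, however, has a genuine gap, and it sits precisely at the point you flag as "the main obstacle." You propose to pass from $\nabla K=0$ to $\nabla\,\mathrm{Ric}=0$ and then to "feed" this into the identity $p_k\,\mathrm{Ric}({\dot{\partial}}^j,{\dot{\partial}}^k)=\frac{cn\beta^2}{1-2c\beta^2\tau}p^j+I^j$ extracted from the proof of Theorem \ref{THM2}. But in that proof the identity is exploited through the \emph{pointwise} equation $\mathrm{Ric}=\lambda G$, which evaluates the left-hand side; parallelism of $\mathrm{Ric}$ evaluates nothing. To extract information you would have to covariantly differentiate the identity, which brings in derivatives of $I^j$ (hence higher Cartan tensors) and of the metric coefficients, and you give no argument that this closes to the scalar relation carrying $\frac{2\tau}{1-2c\beta^2\tau}$ that you invoke. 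Even if it did close, you would only reach $I^j=0$ and would then need a Deicke-type theorem to conclude that $(M,K)$ is Riemannian. The paper's converse takes a different and more direct route: it isolates one mixed component of $\nabla K$, writing $(\nabla K)({\dot{\partial}}^u,{\dot{\partial}}^i,\delta_j,{\dot{\partial}}^k)=M^{uiks}_j\delta_s+N^{uik}_{sj}{\dot{\partial}}^s$, and computes the momentum contraction $p^jp_iM^{uiks}_j=2(1-c\beta^2\tau)C^{kus}$ as in (\ref{symmetric6}); since $1-c\beta^2\tau>0$ for $c<0$ and on the tube $2\tau<\frac{1}{c\beta^2}$ for $c>0$, this forces the full Cartan tensor $C^{kus}$ to vanish outright, with no appeal to the Ricci tensor. (For fairness: the paper itself only carries out this contraction after an additional "Berwald--Cartan" simplification, so its converse is not fully general either; but the mechanism is a direct contraction of a $\nabla K$ component, not the Ricci argument you outline.) To repair your proof you should replace the Ricci step by this direct computation of a suitable block of $\nabla K$ contracted with $p$.
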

\begin{proof}
Let $(M,K)$ be  a Cartan space. Using (\ref{Curvature6}),
(\ref{curvature13})-(\ref{curvature17}), we have computed the
covariant derivatives of curvature tensor field $K$ in the local
adapted frame $(\delta_i,\dot{\partial}^i)$ with respect to the
connection $\nabla$ and  obtained in the twelve cases the result
is zero.

Conversely, let $(i)$ and $(ii)$ are hold. Thus we get $\nabla K=0$. For simplify, we let $(M,K)$ be a Berwald-Cartan
space. Put
\[
(\nabla
K)({\dot{\partial}}^u,{\dot{\partial}}^i,\delta_j,{\dot{\partial}}^k)=M^{uiks}_j\delta_s+N^{uik}_{sj}{\dot{\partial}}^s.
\]
Since $(\nabla
K)({\dot{\partial}}^u,{\dot{\partial}}^i,\delta_j,{\dot{\partial}}^k)=0$,
then we have $M^{uiks}_j=N^{uik}_{sj}=0$. Thus we have
$p^jp_iM^{uiks}_j=0$. By a straightforward calculation,  we obtain
\begin{eqnarray}
p^jp_iM^{uiks}_j=p^jp_iC^{ks,i,u}_j\!\!\!\!&-&\!\!\!\!2c\beta\tau(G^{ks,u}+G^{kh}C^{us}_h+G^{uh}C^{ks}_h+G^{hs}C^{uk}_h)\nonumber\\
\!\!\!\!&+&\!\!\!\! 2c^2\beta^2\tau(G^{us}G^{kh}+G^{uk}G^{hs})p_h.\label{symmetric1}
\end{eqnarray}
By a direct computation, we have
\begin{equation}
G^{ks,u}=-2\beta
C^{ksu}+\frac{2c^2\beta^5}{(1-2c\beta^2\tau)^2}p^kp^up^s+\frac{c\beta^3}{1-2c\beta^2\tau}g^{ku}p^s+\frac{c\beta^3}{1-2c\beta^2\tau}g^{su}p^k.\label{symmetric2}
\end{equation}
Also, we obtain
\begin{equation}
G^{us}G^{kh}p_h=\frac{\beta}{1-2c\beta^2\tau}p^sG^{ku}=\frac{\beta^2}{1-2c\beta^2\tau}g^{ku}p^s+\frac{c\beta^4}{(1-2c\beta^2\tau)^2}p^up^kp^s,\label{symmetric3}
\end{equation}
and
\begin{equation}
G^{uk}G^{hs}p_h=\frac{\beta}{1-2c\beta^2\tau}p^kG^{su}=\frac{\beta^2}{1-2c\beta^2\tau}g^{su}p^k+\frac{c\beta^4}{(1-2c\beta^2\tau)^2}p^up^sp^k.\label{symmetric4}
\end{equation}
Putting (\ref{symmetric2}), (\ref{symmetric3}) and
(\ref{symmetric4}) in (\ref{symmetric1}), one can yields
\begin{equation}
p^jp_iM^{uiks}_j=p^jp_iC^{ks,i,u}_j-2c\beta\tau(-2\beta
C^{kus}+G^{kh}C^{us}_h+G^{uh}C^{ks}_h+G^{hs}C^{uk}_h).\label{symmetric5}
\end{equation}
Since $p_iC^{ks,i}_j=-C^{ks}_j$, then we obtain
\[
p^jp_iC^{ks,i,u}_j=2g^{uj}C^{ks}_j=2C^{kus}.
\]
Also, since
$C^{us}_hp^h=0$, then we have
\[
G^{kh}C^{us}_h=G^{uh}C^{ks}_h=G^{hs}C^{uk}_h=\beta C^{kus}.
\]
Therefore (\ref{symmetric5}) reduces to following
\begin{equation}
p^jp_iM^{uiks}_j=2(1-c\beta^2\tau)C^{kus}.\label{symmetric6}
\end{equation}
Since $p^jp_iM^{uiks}_j=0$, then by (\ref{symmetric6}), we have
$C^{kus}=0$, i.e., $K$  is a Riemannian metric.
\end{proof}
\begin{cor}
There is not exist any non-Riemannian Cartan structure such that
$(T^{\ast}M_0, G, J)$ became a locally symmetric manifold.
\end{cor}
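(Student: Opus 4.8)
The statement is a biconditional about local symmetry, i.e. $\nabla K=0$, so the plan is to treat the two implications separately, with the converse being the substantive part. The forward direction will exploit the collapse of the curvature formulas when $(M,K)$ is Riemannian, while the converse will isolate the Cartan tensor $C^{ijk}$ from a single well-chosen component of $\nabla K$ and force it to vanish.

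First I would handle the ``if'' direction. When $(M,K)$ reduces to a Riemannian space the Cartan tensor $C^{ijk}$ vanishes, hence $P^h_{ik}=H^h_{ik}-\dot{\partial}^hN_{ik}=0$ and the coefficients $H^i_{jk}$ depend on $x$ alone. Substituting these vanishings into (\ref{Curvature6}) and (\ref{curvature13})--(\ref{curvature17}) collapses every curvature component to the constant-curvature forms already recorded in the proof of Theorem \ref{THM2}, namely $K(\delta_i,\delta_j)\delta_k=c\beta(G_{kj}\delta^s_i-G_{ki}\delta^s_j)\delta_s$ together with its vertical and mixed analogues (\ref{Ricc4}), (\ref{Ricc6}), (\ref{Ricc7}), (\ref{Ricc9}), (\ref{Ricc10}), (\ref{Ricc12}). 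These are exactly the curvature tensors of a complex space form of constant holomorphic sectional curvature, which are parallel; so I would confirm $\nabla K=0$ directly by computing each of the twelve component types in the adapted frame $(\delta_i,\dot{\partial}^i)$ and checking that the Koszul terms produced by (\ref{con3}), (\ref{con2}), (\ref{con4}) and (\ref{con9}) cancel.

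For the converse, assuming $\nabla K=0$, the aim is to extract $C^{ijk}=0$. Rather than impose $\nabla K=0$ on all twelve components, I would select the single component $(\nabla K)(\dot{\partial}^u,\dot{\partial}^i,\delta_j,\dot{\partial}^k)$, decompose it in the adapted frame as $M^{uiks}_j\delta_s+N^{uik}_{sj}\dot{\partial}^s$, and use its vanishing to set $M^{uiks}_j=0$. The decisive step is then to contract the horizontal coefficient with $p^jp_i$: using the homogeneity relation $p_iC^{ks,i}_j=-C^{ks}_j$, the orthogonality $C^{us}_hp^h=0$, and the explicit derivatives of $G^{ks}$ in (\ref{symmetric2})--(\ref{symmetric4}), the many terms telescope to $p^jp_iM^{uiks}_j=2(1-c\beta^2\tau)C^{kus}$. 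Since $1-c\beta^2\tau>0$ in both regimes (automatically when $c<0$, and by the tube condition $2\tau<\frac{1}{c\beta^2}$ when $c>0$), the vanishing of the left-hand side forces $C^{kus}=0$, so $K$ is Riemannian.

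The main obstacle is the bulk of computing $\nabla K$ from the already enormous expressions (\ref{Curvature6}), (\ref{curvature13})--(\ref{curvature17}): each covariant derivative reintroduces the connection coefficients through (\ref{con3}), (\ref{con2}), (\ref{con4}) and (\ref{con9}) and spawns many terms, so the difficulty is organizational rather than conceptual. To keep the converse tractable I would restrict to the Berwald--Cartan case, where $C^{ijk}$ is covariantly constant and the $h$- and $v$-derivatives simplify, compute only the one coefficient $M^{uiks}_j$ needed for the contraction, and lean on cancellation identities analogous to (\ref{Curvature5}) and (\ref{Curvature12}) so that only the isolated factor $(1-c\beta^2\tau)C^{kus}$ survives.
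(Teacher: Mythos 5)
Your proposal reproduces the paper's argument for Theorem \ref{THM3} (of which this corollary is an immediate consequence): the forward direction by checking that all twelve covariant-derivative components of the curvature vanish in the Riemannian case, and the converse by restricting to the Berwald--Cartan setting, isolating the component $(\nabla K)(\dot{\partial}^u,\dot{\partial}^i,\delta_j,\dot{\partial}^k)=M^{uiks}_j\delta_s+N^{uik}_{sj}\dot{\partial}^s$, and contracting to obtain $p^jp_iM^{uiks}_j=2(1-c\beta^2\tau)C^{kus}$, whence $C^{kus}=0$. This is essentially the same approach as the paper; your explicit remark that $1-c\beta^2\tau>0$ in both curvature regimes is a small but welcome addition the paper leaves implicit.
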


\noindent
Esmail Peyghan\\
Faculty  of Science, Department of Mathematics\\
Arak University\\
Arak,  Iran\\
Email: epeyghan@gmail.com

\bigskip

\noindent
Akbar Tayebi\\
Faculty  of Science, Department of Mathematics\\
Qom University \\
Qom. Iran\\
Email: akbar.tayebi@gmail.com

\end{document}